\def\mb{\mathbb}
\def\m{\mathcal}
\def\md{\mathbb}
\def\ms{\mathscr}
\def\mb{\mathbf}
\def\tn{\textnormal}
\def\wt{\widetilde}
\def\wh{\widehat}
\def\NaturalF{\md{N}}
\newcommand{\dfn}{ \stackrel{\tn{def}}{=} }
\newtheorem{lemma}{Lemma}
\newtheorem{corollary}{Corollary}
\newtheorem{theorem}{Theorem}
\newtheorem{remark}{Remark}
\newtheorem{example}{Example}
\begin{document}

\title{A Note on a Characterization of R\'enyi Measures and its Relation to Composite Hypothesis Testing}

\author{Ofer~Shayevitz
\thanks{The author is with the Information Theory \& Applications Center, University of California, San Diego, USA \{email: ofersha@ucsd.edu\}. }}

\maketitle

\thispagestyle{empty}

\begin{abstract}
The R\'enyi information measures are characterized in terms of their Shannon counterparts, and properties of the former are recovered from first principle via the associated properties of the latter. Motivated by this characterization, a two-sensor composite hypothesis testing problem is presented, and the optimal worst case miss-detection exponent is obtained in terms of a R\'enyi divergence.
\end{abstract}

\thispagestyle{empty}

\section{Introduction}
The Shannon Entropy and the Kullback-Leibler divergence play a pivotal role in the study of information theory, large deviations and statistics, arising as the answer to many of the fundamental questions in these fields. Besides their operational importance, these quantities also possess some very natural properties one would expect an information measure to satisfy, a fact that has spurred several different axiomatic characterizations, see \cite{Csiszar2008} and references therein.

Motivated by the axiomatic approach, R\'enyi suggested a more general class of measures satisfying some slightly weaker postulates, yet still intuitively appealing as measures of information \cite{Renyi1960}. Remarkably, this ``reversed'' line of thought has proved fruitful; the R\'enyi information measures have been shown to admit several operational interpretations, thereby ``justifying'' their definition. Among other cases, the R\'enyi entropy has appeared as a fundamental quantity in problems of source coding with exponential weights \cite{Campbell65}, random search \cite{Renyi65}, error exponents in source coding \cite{jelinek68}, generalized cutoff rates for source coding \cite{Csiszar95}, guessing moments \cite{arikan96}, privacy amplification \cite{Bennet_etal1995}, predictive channel coding with transmitter side information \cite{ErezZamir2001}, and redundancy-delay exponents in source coding \cite{Delay_constrained_coding_IT}. The R\'enyi divergence has emerged (sometimes implicitly) in the analysis of channel coding error exponents \cite{Gallager65,PolyanskiyVerdu2010}, generalized cutoff rates for hypothesis testing \cite{Csiszar95}, multiple source adaptation \cite{Mansour09}, and generalized guessing moments \cite{Erven10}. Several different definitions of a R\'enyi mutual information (and the associated capacity) were tied to generalized cutoff rates in channel coding \cite{Arimoto77,Csiszar95}, and to distortion in joint source-channel coding \cite{Ingber_etal2008}.

Interestingly, even though the Shannon measures are a special case of the R\'enyi measures, the latter can admit a variational characterization in terms of the former. For the R\'enyi entropy (of order $\alpha<1$) this has been observed in the context of guessing moments \cite{arikan96,MerhavArikan99}, and for one definition of a R\'enyi mutual information, has been derived in the context of generalized cutoff rates in channel coding \cite[Appendix]{Csiszar95}. In this note, relations of that type and their applications\footnote{In fact, the impetus for this short study grew out of a recent work by the author and colleagues \cite{Delay_constrained_coding_IT}, where the characterization for the R\'enyi entropy of order $2$ has been utilized to obtain a lower bound on the redundancy-delay exponent in lossless source coding.} are further examined. Section \ref{sec:prelim} contains the necessary mathematical background. In Section \ref{sec:char}, a variational characterization for the various R\'enyi measures via the Shannon measures is provided. In Section \ref{sec:prop}, it is demonstrated how properties of the R\'enyi measures can be derived in a very instructive (and sometimes simpler) fashion directly from their variational characterization, via the associated properties of their Shannon counterparts. Finally, the discussed characterization motivates the study of a two-sensor composite hypothesis testing problem in which the R\'enyi divergence is shown to play a fundamental role, yielding a new operational interpretation to that quantity. This observation is discussed in Section \ref{sec:hyp}.

\section{Preliminaries}\label{sec:prelim}
\subsection{Shannon Information Measures}
Let $\m{X}$ be a finite alphabet, and denote by $\ms{P}(\m{X})$ the set of all probability distributions over $\m{X}$. The support of a distribution $P\in\ms{P}(\m{X})$ is the set $S(P)\dfn \{x\in\m{X} : P(x)>0\}$. The (Shannon) \textit{entropy} of $P\in\ms{P}(\m{X})$ is\footnote{We use the conventions $0\log 0 = 0$, and $a\log\frac{a}{0} = 0 \;\text{or}\; +\infty$ according to whether  $a=0$ or $a>0$ respectively.}
\begin{equation*}
H(P) \dfn -\sum_{x\in\m{X}} P(x)\log P(x)\,.
\end{equation*}
The (Kullback-Leibler) \textit{divergence} between two distributions $P_1,P_2\in\ms{P}(\m{X})$ is
\begin{equation*}
D(P_1\|P_2) \dfn \sum_{x\in\m{X}} P_1(x)\log\left(\frac{P_1(x)}{P_2(x)}\right)\,.
\end{equation*}
We write $P_1\ll P_2$ to indicate that $S(P_1)\subseteq S(P_2)$. Note that $D(P_1\|P_2)<\infty$ if and only if $P_1\ll P_2$.

Let $\m{X},\m{Y}$ be two finite alphabets. A \textit{channel} $\m{W}:\m{X}\mapsto\m{Y}$ is a set of probability distributions $\{W(\cdot|x)\in\ms{P}(\m{Y})\}_{x\in\m{X}}$ that maps a distribution $P\in\ms{P}(\m{X})$ to the distributions $P\circ W\in\ms{P}(\m{X}\times\m{Y})$ and $PW\in\ms{P}(\m{Y})$, according to
\begin{align*}
(P\circ W)(x,y) &\dfn P(x)W(y|x) \\
PW(y) &\dfn \sum_{x\in\m{X}}P(x)W(y|x)\,.
\end{align*}
For any two channels $V:\m{X}\mapsto\m{Y}, W:\m{X}\mapsto\m{Y}$, we write
\begin{equation*}
D(V\|W|P) \dfn \sum_{x\in\m{X}}P(x)D(V(\cdot|x)\|W(\cdot|x))
\end{equation*}
The \textit{(Shannon) mutual information} associated with $P$ and $W$ is
\begin{align}
\nonumber I(P,W) &\dfn H(PW) - \sum_{x\in\m{X}}P(x)H(W(\cdot|x))\\
& = \min_Q\sum_{x\in\m{X}}P(x)D(W(\cdot|x)\|Q)\label{eq:MI4I}\\
&  =\min_QD(P\circ W\|P\times Q)\label{eq:MI4K}
\end{align}
where the identities are well known. The (Shannon) \textit{capacity} of a channel $W$ is
\begin{align*}
C(W) \dfn \max_P I(P,W)
\end{align*}

A distribution $P\in\ms{P}(\m{X})$ induces a product distribution $P^n\in\ms{P}(\m{X}^n)$, where $P^n(x^n)\dfn \prod_{k=1}^nP(x_k)$. The \textit{type} of a sequence $x^n\in\m{X}^n$ is the distribution $\pi_{x^n}\in\ms{P}(\m{X})$ corresponding to the relative frequency of symbols in $x^n$. The set of all possible types of sequences $x^n$ is denoted $\ms{P}^n(\m{X})$. The \textit{type class} of any type $Q\in\ms{P}^n(\m{X})$ is the set $T_Q\dfn \{x^n\in\m{X}^n: \pi_{x^n}=Q\}$.

The following facts are well known \cite{csiszar_korner}.
\begin{lemma}\label{lem:types}
For any type $Q\in\ms{P}^n(\m{X})$ and any $x^n\in T_Q$:
\begin{enumerate}[(i)]
\item $P^n(x^n) = 2^{-n(D(Q\|P)+H(P))}$. \label{item:prob_in_type}
\item $|\ms{P}^n(\m{X})|^{-1}2^{nH(Q)} \leq |T_Q| \leq  2^{nH(Q)}$. \label{item:type_size}
\item $|\ms{P}^n(\m{X})|  = {n+|\m{X}|-1 \choose |\m{X}|-1} \leq (n+1)^{|\m{X}|}$. \label{item:num_of_types}
\item For any $\delta>0$ \label{item:large_dev}
\begin{equation*}
P^n\left(\left\{x^n\in\m{X}^n : D(\pi_{x^n}\|P) \geq  \delta\right\}\right) \leq |\ms{P}^n(\m{X})|2^{-n\delta}\,.
\end{equation*}
\end{enumerate}
\end{lemma}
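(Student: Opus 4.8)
The four assertions are the familiar pillars of the method of types, and the plan is to establish them in the order (iii), (i), (ii), (iv), since each relies on the preceding ones. For (iii), I would observe that a type $Q\in\ms{P}^n(\m{X})$ is determined precisely by its vector of occurrence counts $\big(nQ(x)\big)_{x\in\m{X}}$, that is, by a tuple of $|\m{X}|$ non-negative integers summing to $n$; the standard stars-and-bars count gives exactly $\binom{n+|\m{X}|-1}{|\m{X}|-1}$ such tuples, and since each coordinate lies in $\{0,1,\dots,n\}$ this number is crudely at most $(n+1)^{|\m{X}|}$. For (i), if $x^n\in T_Q$ then every $x\in\m{X}$ appears exactly $nQ(x)$ times in $x^n$, so $P^n(x^n)=\prod_{x\in\m{X}}P(x)^{nQ(x)}$; taking the base-$2$ logarithm and writing $\log P(x)=\log\tfrac{P(x)}{Q(x)}+\log Q(x)$ over $S(Q)$ turns the exponent into $-n\big(D(Q\|P)+H(Q)\big)$, i.e.\ the claimed value with the entropy evaluated at the type $Q=\pi_{x^n}$.

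For (ii), the upper bound is immediate from (i) applied with $P$ replaced by $Q$: then $Q^n(x^n)=2^{-nH(Q)}$ for every $x^n\in T_Q$, and $Q^n(T_Q)\le 1$ forces $|T_Q|\le 2^{nH(Q)}$. The lower bound is the one genuinely nontrivial point, and I would obtain it by showing that $T_Q$ is the \emph{most likely} type class under $Q^n$, namely $Q^n(T_Q)\ge Q^n(T_{Q'})$ for every $Q'\in\ms{P}^n(\m{X})$. Writing the type-class sizes as multinomial coefficients, $|T_Q|=n!/\prod_{x\in\m{X}}(nQ(x))!$, the ratio $Q^n(T_Q)/Q^n(T_{Q'})$ reduces after cancellation to $\prod_{x\in\m{X}}\tfrac{(nQ'(x))!}{(nQ(x))!}\,Q(x)^{n(Q(x)-Q'(x))}$; invoking the elementary inequality $a!/b!\ge b^{\,a-b}$ for non-negative integers $a,b$ (with $0^0=1$; the degenerate cases where some $Q(x)=0$ are trivial) coordinate-wise, and then using $\sum_{x\in\m{X}}(Q'(x)-Q(x))=0$, this ratio collapses to $n^{\,n\sum_{x}(Q'(x)-Q(x))}=1$. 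Since the at most $|\ms{P}^n(\m{X})|$ type classes partition $\m{X}^n$, the most likely one carries probability at least $|\ms{P}^n(\m{X})|^{-1}$, and combining this with $Q^n(T_Q)=|T_Q|\,2^{-nH(Q)}$ gives $|T_Q|\ge|\ms{P}^n(\m{X})|^{-1}2^{nH(Q)}$.

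For (iv), I would partition the event $\{x^n:D(\pi_{x^n}\|P)\ge\delta\}$ into the type classes $T_Q$ with $D(Q\|P)\ge\delta$; by (i) together with the upper bound in (ii), $P^n(T_Q)=|T_Q|\,2^{-n(D(Q\|P)+H(Q))}\le 2^{-nD(Q\|P)}\le 2^{-n\delta}$, and summing over the at most $|\ms{P}^n(\m{X})|$ relevant types gives the bound. The main obstacle is the lower bound in (ii) --- specifically the factorial inequality $a!/b!\ge b^{a-b}$ and the bookkeeping that makes the ratio of type-class probabilities telescope down to $1$; the rest is direct counting together with a union bound.
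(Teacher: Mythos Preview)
Your proof is correct and follows exactly the classical method-of-types arguments from Csisz\'ar--K\"orner, which is precisely the reference the paper cites; the paper itself does \emph{not} prove this lemma, stating only that ``the following facts are well known \cite{csiszar_korner}.'' You also correctly caught that the exponent in (i) must read $D(Q\|P)+H(Q)$ rather than $D(Q\|P)+H(P)$ as printed (and indeed the paper uses it in this corrected form in the proof of Theorem~\ref{thrm}).
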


\subsection{R\'enyi Information Measures}
Let $\alpha>0$, $\alpha\neq 1$ throughout. The \textit{R\'enyi entropy of order $\alpha$} of a distribution $P\in\ms{P}(\m{X})$ is
\begin{equation*}
H_\alpha(P) \dfn \frac{1}{1-\alpha} \log\sum_{x\in\m{X}}P(x)^\alpha\,.
\end{equation*}
We denote by $H_0(P),H_1(P)$ and $H_\infty(P)$ the limits of $H_\alpha(P)$ as $\alpha$ tends to $0,1$ and $\infty$, respectively\footnote{These limits are known to exist, a fact we reestablish in the sequel.}. \newcounter{temp}\setcounter{temp}{\value{footnote}}
The \textit{R\'enyi divergence of order $\alpha$} between two distributions $P_1,P_2\in\ms{P}(\m{X})$ is\footnote{For $\alpha>1$ we adopt the convention where $a^\alpha \cdot 0^{1-\alpha} = 0  \;\text{or}\; +\infty$ according to whether  $a=0$ or $a>0$ respectively.}
\begin{equation*}
D_\alpha(P_1\|P_2) \dfn \frac{1}{\alpha-1} \log\sum_{x\in \m{X}}P_1(x)^\alpha P_2(x)^{1-\alpha}\,.
\end{equation*}
We denote by $D_0(P_1\|P_2)$, $D_1(P_1\|P_2)$ and $D_\infty(P_1\|P_2)$ the limits of $D_\alpha(P_1\|P_2)$ as $\alpha$ tends to $0,1$ and $\infty$, respectively\footnotemark[\value{temp}]. Note that for $\alpha<1$, $D_\alpha(P_1\|P_2) < \infty$ if and only if $S(P_1)\cap S(P_2)\neq \emptyset$, and for $\alpha>1$, $D_\alpha(P_1\|P_2) < \infty$ if and only if $P_1\ll P_2$.

The R\'enyi equivalent of the Shannon mutual information has several different definitions, each generalizing a different expansion of the latter, see \cite{Csiszar95} and references therein. Here we discuss the following two alternatives:
\begin{align}\label{eq:I_a_def}
I_\alpha(P,W) \dfn \min_Q\sum_{x\in\m{X}}P(x)D_\alpha(W(\cdot|x)\|Q)
\end{align}
corresponding to (\ref{eq:MI4I}), and
\begin{equation}\label{eq:K_a_def}
K_\alpha(P,W)\dfn \min_Q D_\alpha(P\circ W\|P\times Q)
\end{equation}
corresponding to (\ref{eq:MI4K}). Following \cite{Csiszar95}, we define the \textit{capacity of order $\alpha$} of $W$ via (\ref{eq:I_a_def}), i.e.,
\begin{align*}
C_\alpha(W) \dfn \max_P I_\alpha(P,W)
\end{align*}
However, using (\ref{eq:K_a_def}) in the definition yields the same capacity function \cite{Csiszar95}, a fact we reaffirm in the sequel.

\section{Characterization}\label{sec:char}
In this section, we derive the basic characterization for the various R\'enyi measures in terms of the Shannon measures.
\begin{theorem}\label{thrm}
For $\alpha>1$,
\begin{align}
H_\alpha(P) &= \min_{Q}\left\{\frac{\alpha}{\alpha-1}\,D(Q\|P)+H(Q)\right\}\label{eq:H_a}\\
D_\alpha(P_1\|P_2) &= \max_{Q\ll P_1}\left\{\frac{\alpha}{1-\alpha}\,D(Q\|P_1)+D(Q\|P_2)\right\}\label{eq:D_a}\\
I_\alpha(P,W) &= \max_V\left\{I(P,V) + \frac{\alpha}{1-\alpha}D(V\|W|P)\right\}\label{eq:I_a}
\\
K_\alpha(P,W) &= \max_{Q}\left\{I_\alpha(Q,W)+\frac{1}{1-\alpha}D(Q\|P)\right\}\label{eq:K_a}
\end{align}
For $\alpha<1$, replace $\min$ with $\max$ and vice versa.
\end{theorem}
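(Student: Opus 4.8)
The plan is to prove the four identities in the order written, deriving each mutual-information identity from the divergence identity, and to treat $\alpha>1$ and $\alpha<1$ by the same algebra, the only extra work for $\alpha>1$ being a single $\min$--$\max$ exchange.

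For (\ref{eq:H_a}) and (\ref{eq:D_a}) I would argue by a direct computation. Writing $\tilde P(x)\dfn P(x)^\alpha\big/\sum_{x'}P(x')^\alpha$ one checks the identity
\[
\tfrac{\alpha}{\alpha-1}D(Q\|P)+H(Q)=\tfrac{1}{\alpha-1}\Bigl(D(Q\|\tilde P)-\log\textstyle\sum_x P(x)^\alpha\Bigr),
\]
and, with $\tilde P(x)\dfn P_1(x)^\alpha P_2(x)^{1-\alpha}\big/\sum_{x'}P_1(x')^\alpha P_2(x')^{1-\alpha}$, the analogous identity with left side $\tfrac{\alpha}{1-\alpha}D(Q\|P_1)+D(Q\|P_2)$ and leading coefficient $\tfrac{1}{1-\alpha}$. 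Each right side is affine in $D(Q\|\tilde P)\ge 0$, so optimizing over $Q$ is immediate: the extremum is attained at $Q=\tilde P$ (where the divergence vanishes), its value is $H_\alpha(P)$ (resp. $D_\alpha(P_1\|P_2)$), and whether it is a minimum or a maximum is decided by the sign of the leading coefficient -- $\tfrac{1}{\alpha-1}$ for (\ref{eq:H_a}), $\tfrac{1}{1-\alpha}$ for (\ref{eq:D_a}) -- which flips at $\alpha=1$, matching the statement. The boundary cases ($Q\not\ll\tilde P$; or, for (\ref{eq:D_a}), $S(P_1)\cap S(P_2)=\emptyset$ or $P_1\not\ll P_2$) are checked against the stated conventions and are consistent on both sides; in (\ref{eq:D_a}) the explicit restriction $Q\ll P_1$ is precisely what removes the spurious branch.

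For (\ref{eq:I_a}) I would start from $I_\alpha(P,W)=\min_Q\sum_x P(x)D_\alpha(W(\cdot|x)\|Q)$ and substitute (\ref{eq:D_a}) into each $D_\alpha(W(\cdot|x)\|Q)$; since the optimizations over the $V(\cdot|x)$ are independent, $I_\alpha(P,W)=\min_Q\,\mr{opt}_V\bigl\{\tfrac{\alpha}{1-\alpha}D(V\|W|P)+\sum_x P(x)D(V(\cdot|x)\|Q)\bigr\}$, where $\mr{opt}_V=\max$ for $\alpha>1$, $=\min$ for $\alpha<1$, and $V$ ranges over channels with $V(\cdot|x)\ll W(\cdot|x)$. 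For $\alpha<1$ the outer and inner optimizations are both minima, hence interchangeable, after which $\min_Q$ collapses $\sum_x P(x)D(V(\cdot|x)\|Q)$ to $I(P,V)$ by (\ref{eq:MI4I}), giving the claim. For $\alpha>1$ one must swap $\min_Q$ with $\max_V$, and this is the only genuinely delicate point. What makes it go through is that, since $\alpha>1$ makes the coefficient $\tfrac{1}{1-\alpha}$ of the ``$t\log t$'' contribution negative, the bracketed objective is concave in $V$; it is manifestly convex in $Q$ (a sum of divergences in the second argument); and after harmlessly restricting the outer $\min$ to the compact convex set of $Q$ whose support contains $\bigcup_{x:P(x)>0}S(W(\cdot|x))$ -- outside which the objective is $+\infty$ -- the objective is continuous there, so a standard minimax theorem permits the swap, and $\min_Q$ again collapses to $I(P,V)$. (Alternatively one avoids minimax: ``$\ge$'' follows by using (\ref{eq:D_a}) as a lower bound for an arbitrary $V$ together with $\sum_x P(x)D(V(\cdot|x)\|Q)=I(P,V)+D(PV\|Q)$ and $D(PV\|Q)\ge0$; the reverse follows from the stationarity condition at the optimal $Q$, which forces the channel $V$ it induces through (\ref{eq:D_a}) to satisfy $PV=Q$, killing the $D(PV\|Q)$ term.)

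Finally, (\ref{eq:K_a}) follows the same template on $\m{X}\times\m{Y}$. In $K_\alpha(P,W)=\min_Q D_\alpha(P\circ W\|P\times Q)$ substitute (\ref{eq:D_a}), keeping the optimization variable as a joint law $R\circ V$ ($R$ its $\m{X}$-marginal, $V$ its conditional): $K_\alpha(P,W)=\min_Q\,\mr{opt}_{R\circ V}\{\tfrac{\alpha}{1-\alpha}D(R\circ V\|P\circ W)+D(R\circ V\|P\times Q)\}$. As a function of the joint law this objective is again concave for $\alpha>1$ (its ``$t\log t$'' coefficient is $\tfrac{\alpha}{1-\alpha}+1=\tfrac{1}{1-\alpha}<0$) and convex in $Q$, so $\min_Q$ and $\mr{opt}$ may be exchanged (for $\alpha<1$ both are minima and the swap is trivial). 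By the chain rule $D(R\circ V\|P\circ W)=D(R\|P)+D(V\|W|R)$ and $D(R\circ V\|P\times Q)=D(R\|P)+\sum_x R(x)D(V(\cdot|x)\|Q)$, so the two $D(R\|P)$ terms combine with coefficient $\tfrac{1}{1-\alpha}$; after the swap the inner $\min_Q$ turns $\sum_x R(x)D(V(\cdot|x)\|Q)$ into $I(R,V)$, then $\mr{opt}_V\{I(R,V)+\tfrac{\alpha}{1-\alpha}D(V\|W|R)\}=I_\alpha(R,W)$ by the already-established (\ref{eq:I_a}), and what remains, $\mr{opt}_R\{I_\alpha(R,W)+\tfrac{1}{1-\alpha}D(R\|P)\}$, is (\ref{eq:K_a}). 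Thus the one substantive obstacle throughout is the $\min$--$\max$ exchange for $\alpha>1$, and in each instance it is the sign of $\tfrac{1}{1-\alpha}$ that supplies the concavity needed to clear it.
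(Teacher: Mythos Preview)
Your argument is correct, and for (\ref{eq:I_a}) and (\ref{eq:K_a}) it is essentially the paper's proof: substitute the divergence identity, invoke a minimax theorem (the paper cites Sion) to swap the $\min_Q$ with the $\max$ when $\alpha>1$, and then collapse $\min_Q\sum_x P(x)D(V(\cdot|x)\|Q)$ to $I(P,V)$ via (\ref{eq:MI4I}). Your concavity justification---that the net ``$t\log t$'' coefficient in the objective is $\tfrac{\alpha}{1-\alpha}+1=\tfrac{1}{1-\alpha}<0$---is exactly the content of the paper's footnoted rewriting $D(V\|Q)-D(V\|W)+\tfrac{1}{1-\alpha}D(V\|W)$.

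Where you genuinely diverge is in (\ref{eq:H_a}) and (\ref{eq:D_a}). You ``complete the divergence'': rewrite the objective as $\tfrac{1}{\alpha-1}\bigl(D(Q\|\tilde P)-\log Z\bigr)$ for the tilted law $\tilde P$, so the optimum is transparently at $Q=\tilde P$. The paper instead proves the more general identity
\[
-\log\sum_{x}P_1(x)^\alpha P_2(x)^\beta \;=\; \min_{Q\ll P_1}\bigl\{\alpha D(Q\|P_1)+\beta D(Q\|P_2)+(\alpha+\beta-1)H(Q)\bigr\}
\]
by the method of types (additivity, then sandwich the $n$-fold sum between type-class bounds, then $n\to\infty$), and specializes $\beta=0$ and $\beta=1-\alpha$. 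Your route is shorter and more elementary for the theorem as stated; the paper's route buys a reusable tool---the general $J_{\alpha,\beta}$ formula is invoked later (with $\alpha=1$, $\beta=-\lambda$) in the Campbell coding-theorem derivation, where your tilted-law trick does not directly apply because the exponent pair no longer sums to $1$.
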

\begin{remark}
The $\alpha<1$ counterpart of (\ref{eq:H_a}) is mentioned in \cite{arikan96,MerhavArikan99}. Both (\ref{eq:H_a}) and (\ref{eq:D_a}) are simple generalizations, for which we provide an elementary proof. Relation (\ref{eq:I_a}) can be found in \cite[Appendix]{Csiszar95}, however here we provide a slightly different proof directly via (\ref{eq:D_a}). Relation (\ref{eq:K_a}) appears to be new.
\end{remark}

\begin{proof}
Let $\m{X}_1\dfn S(P_1)$ and $\m{X}_2\dfn S(P_2)$ for short. We derive a characterization for the functional
\begin{equation}
J_{\alpha,\beta}(P_1,P_2)\dfn -\log\sum_{x\in\m{X}_1}P_1(x)^\alpha P_2(x)^\beta
\end{equation}
for any $\alpha>0$ and $\beta$. This will yield (\ref{eq:H_a}) and (\ref{eq:D_a}) in particular, and will also prove useful in the sequel. It is readily verified that the functional is additive, i.e., $J_{\alpha,\beta}(P_1^n,P_2^n) = nJ_{\alpha,\beta}(P_1,P_2)$. Therefore,
\begin{align*}
&J_{\alpha,\beta}(P_1,P_2) = -\frac{1}{n}\log\sum_{x^n\in\m{X}_1^n}P_1(x^n)^\alpha P_2(x^n)^\beta
\\
&\leq  -\frac{1}{n}\log\sum_{Q\in\ms{P}^n(\m{X}_1)}2^{-n(\alpha (D(Q\|P_1)+H(Q)) + \beta(D(Q\|P_2)+H(Q))} \\
&\qquad\qquad\qquad\times |\ms{P}^n(\m{X}_1)|^{-1}2^{nH(Q)}
\\
&\leq \!\min_{Q\in\ms{P}^n(\m{X}_1)}\!\left\{\alpha D(Q\|P_1)+\beta D(Q\|P_2) + (\alpha+\beta-1)H(Q)\right\}
\\
&\qquad\qquad\qquad+\frac{|\m{X}_1|\log{(n+1)}}{n}
\end{align*}
where properties (\ref{item:prob_in_type}) and (\ref{item:type_size}) of Lemma \ref{lem:types} were used in the first inequality, and property (\ref{item:num_of_types}) was used in the second inequality. Similarly,
\begin{align*}
&J_{\alpha,\beta}(P_1,P_2)
\\
&\geq  -\frac{1}{n}\log\sum_{Q\in\ms{P}^n(\m{X}_1)}2^{-n(\alpha D(Q\|P_1)+\beta D(Q\|P_2) + (\alpha+\beta-1)H(Q))}
\\
&\geq \!\min_{Q\in\ms{P}^n(\m{X}_1)}\!\left\{\alpha D(Q\|P_1)+\beta D(Q\|P_2) + (\alpha+\beta-1)H(Q)\right\}
\\
&\qquad\qquad\qquad -\frac{|\m{X}_1|\log{(n+1)}}{n}\,.
\end{align*}
$\bigcup_n\ms{P}^n(\m{X}_1)$ is dense in $\ms{P}(\m{X}_1)$, and the objective function is continuous in $Q$ over the compact set $\ms{P}(\m{X}_1\cap\m{X}_2)$, and equals $\pm\infty$ over $\ms{P}(\m{X}_1)\setminus \ms{P}(\m{X}_1\cap\m{X}_2)$ according to ${\rm sign}(\beta)$. Thus, taking the limit as $n\rightarrow\infty$, we obtain:
\begin{align}\label{eq:J}
&J_{\alpha,\beta}(P_1,P_2)
\\
\nonumber &\;= \min_{Q\ll P_1}\left\{\alpha D(Q\|P_1)+\beta D(Q\|P_2) + (\alpha+\beta-1)H(Q)\right\}\,.
\end{align}
The statement for $H_\alpha(P)$ (resp. $D_\alpha(P_1\|P_2)$) now follows by substituting $\beta=0$ (resp. $\beta=1-\alpha$), normalizing by $\alpha-1$ (resp. $1-\alpha$), and noting the possible change in sign that replaces $\min$ with $\max$. For $H_\alpha(P)$, taking the $\min$ or $\max$ over all $Q\in\ms{P}(\m{X})$ does not change anything.

We now turn to prove (\ref{eq:I_a}) and (\ref{eq:K_a}). As in \cite{Csiszar95}, the minimum in (\ref{eq:I_a_def}) and (\ref{eq:K_a_def}) can be replaced with an infimum over distributions $Q$ with $S(Q)=\m{Y}$, merely excluding possibly infinite values. This will be implicit below. For $\alpha>1$, we have
    {\allowdisplaybreaks
    \begin{align}\label{eq:I_der}
    \begin{multlined}
    I_\alpha(P,W)\\
    \shoveleft{\stackrel{(a)}{=}\inf_Q\sum_{x\in\m{X}}P(x)\!\max_{R\ll W(\cdot|x)}\!\left(\frac{\alpha}{1-\alpha}D(R\|W(\cdot|x)) + D(R\|Q)\!\right)}\\
    \shoveleft{= \inf_Q\max_{V}\sum_{x\in\m{X}}P(x)\left(\frac{\alpha}{1-\alpha}D(V(\cdot|x)\|W(\cdot|x))\right.}\\
    \shoveright{\left.\vphantom{\frac{\alpha}{1-\alpha}}+ D(V(\cdot|x)\|Q)\right)}\\
    \shoveleft{\stackrel{(b)}{=} \max_{V}\inf_Q\left(\frac{\alpha}{1-\alpha}D(V\|W|P)\right.}\\
     \shoveright{\left.\vphantom{\frac{\alpha}{1-\alpha}}+ \sum_{x\in\m{X}}P(x)D(V(\cdot|x)\|Q)\right)}\\
     \shoveleft{\stackrel{(c)}{=}\max_V\left\{I(P,V) + \frac{\alpha}{1-\alpha}D(V\|W|P)\right\}}\\ \vspace{-10pt}
    \end{multlined}\vspace{-15pt}
    \end{align}}
The maximization is taken over all channels $V$ such that $P\circ V \ll P\circ W$. The equalities above are justified as follows:
\begin{enumerate}[(a)]
\item by virtue of Theorem \ref{thrm}.
\item the objective function is continuous and concave\footnote{Concavity in $V$ follows by writing each of the summands as $\left[D(V(\cdot|x)\|Q)-D(V(\cdot|x)\|W(\cdot|x))\right]+\frac{1}{1-\alpha}D(V(\cdot|x)\|W(\cdot|x))$, which is the sum of a linear function and a concave function in $V$ (for $\alpha>1$). } in $V$ over a compact set for any fixed $Q$, and convex in $Q$ for any fixed $V$. Hence, $\max$ and $\inf$ can be interchanged \cite[Theorem 4.2]{Sion1958}.
\item on account of (\ref{eq:MI4I}).
\end{enumerate}
This establishes (\ref{eq:I_a}) for $\alpha>1$.\footnote{Taking the last $\max$ over all channels $V:\m{X}\mapsto\m{Y}$ changes nothing.} The simpler derivation for $\alpha<1$ is similar.

To establish (\ref{eq:K_a}), write:
{\allowdisplaybreaks
\begin{align}\label{eq:K_der}
\begin{multlined}
K_\alpha(P,W) \\
\shoveleft[20pt]{\stackrel{(a)}{=} \inf_Q\max_{P'\circ V}\left\{\frac{\alpha}{1-\alpha}D(P'\circ V\|P\circ W)\right.} \\
\shoveright{\left.\vphantom{\frac{\alpha}{1-\alpha}}+D(P'\circ V\|P\times Q)\right\}} \\
\shoveleft[20pt]{\stackrel{(b)}{=} \max_{P'\circ V}\inf_Q\left\{\frac{\alpha}{1-\alpha}D(P'\circ V\|P\circ W)\right.} \\
\shoveright{\left.\vphantom{\frac{\alpha}{1-\alpha}}+D(P'\circ V\|P\times Q)\right\}} \\
\shoveleft[20pt]{= \max_{P'\circ V}\inf_Q\left\{\frac{\alpha}{1-\alpha}D(P'\circ V\|P\circ W)+D(P'\|P)\right.}\\
\shoveright{\left.\vphantom{\frac{\alpha}{1-\alpha}} + D(P'\circ V\|P'\times Q)\right\}}\\
\shoveleft[20pt]{\stackrel{(c)}{=} \max_{P'\circ V}\left\{\frac{\alpha}{1-\alpha}D(P'\circ V\|P\circ W)+D(P'\|P)\right.}
\\
\shoveright{\left.\vphantom{\frac{\alpha}{1-\alpha}} + I(P',V)\right\}}\\
\shoveleft[20pt]{= \max_{P'\circ V}\left\{\frac{\alpha}{1-\alpha}D(V\|W|P')+\frac{1}{1-\alpha}D(P'\|P)\right.}
\\
\shoveright{\left.\vphantom{\frac{\alpha}{1-\alpha}}+ I(P',V)\right\}}\\
\shoveleft[20pt]{\stackrel{(d)}{=} \max_{P'}\left\{I_\alpha(P',W)+\frac{1}{1-\alpha}D(P'\|P)\right\}}\\ \vspace{-15pt}
\end{multlined}\vspace{-10pt}
\end{align}}
The maximization is over all $P'$ and $V$ such that $P'\circ V \ll P\circ W$. Equalities (a) and (b) are justified similarly to their counterparts in (\ref{eq:I_der}), while (c) and (d) follows from (\ref{eq:MI4K}) and (\ref{eq:I_a}) respectively. This establishes (\ref{eq:K_a}) for $\alpha>1$.\footnote{Taking the last $\max$ over all $P'\in\ms{P}(\m{X})$ changes nothing.} The simpler derivation for $\alpha<1$ is similar.
\end{proof}

\section{Properties Revisited}\label{sec:prop}
In this section, we derive some well known and lesser known properties of the R\'enyi measures directly via the characterization in Theorem \ref{thrm}, and the associated properties of the Shannon measures. These alternative derivations appear in many cases more instructive than a direct proof, and are sometimes simpler.

\subsection{$H_\alpha(P)$}\label{subsec:H}

For convenience, define:
\begin{align*}
G_\alpha(P; Q) &\dfn \frac{\alpha}{\alpha-1}D(Q\|P) + H(Q)\,.
\end{align*}
We will repeatedly use the fact that by Theorem \ref{thrm}, $G_\alpha(P; Q)$ is an upper (resp. lower) bound for $H_\alpha(P)$ for $\alpha>1$ (resp. $\alpha<1$). Without loss of generality, we will restrict $Q\ll P$ in Theorem \ref{thrm} throughout.

\begin{enumerate}[1.]
\item \textit{$H_\alpha(P)$ is a non-increasing function of $\alpha$.} \label{item:H_dec}

\qquad\textit{Proof:} For any fixed $Q$, $G_\alpha(P; Q)$ is non-increasing in $\alpha$ over $(0,1)$ (resp. $(1,\infty)$). By Theorem \ref{thrm}, $H_\alpha(P)$ is the maximum (resp. minimum) of $G_\alpha(P; Q)$ taken over $Q$, hence it is also non-increasing in $\alpha$ over $(0,1)$ (resp. $(1,\infty)$). To order the two regions, we note that for $\alpha<1$
\begin{equation*}
H_\alpha(P) \geq G_\alpha(P;P) = \frac{\alpha}{\alpha-1}D(P\|P) + H(P) = H(P)
\end{equation*}
and similarly for $\alpha>1$ we have $H_\alpha(P) \leq H(P)$.

\item $H_\alpha(P)$ is concave in $P$ for $\alpha<1$.

\qquad\textit{Proof:} $H(Q)$ is concave in $Q$ and $D(Q\|P)$ is convex in $(P,Q)$, hence $G_\alpha(P;Q)$ is concave in $(P,Q)$ for $\alpha<1$. The statement follows since maximizing a concave function over a convex set ($\ms{P}(S(P))$ in this case) preserves concavity.

\item $H_0(P) = \log{|S(P)|}$. \label{item:H_0}

\qquad\textit{Proof:} Let $Q'$ be the uniform distribution over $S(P)$. Then on the one hand,
\begin{align*}
H_0(P) &\geq \lim_{\alpha\rightarrow 0}\left(\frac{\alpha}{\alpha-1}D(Q'\|P) + H(Q')\right)
\\
&= H(Q') = \log{|S(P)|}
\end{align*}
and on the other hand,
\begin{align*}
H_0(P) &= \lim_{\alpha\rightarrow 0}\max_{Q\ll P}\left\{\frac{\alpha}{\alpha-1}D(Q\|P) + H(Q)\right\}
\\
&\leq \max_{Q\ll P}H(Q) = \log{|S(P)|}\,.
\end{align*}

\item $H_\infty(P)  = -\log{\max_{x\in\m{X}} P(x)}$: \label{prop:H_infty}

\textit{Proof:} Let $Q'(x')=1$, where $x'\in\m{X}$ satisfies $P(x')=\max_{x\in\m{X}}P(x)$. Then on the one hand,
\begin{align*}
H_\infty(P) &\leq \lim_{\alpha\rightarrow\infty}\left\{\frac{\alpha}{\alpha-1}D(Q'\|P) + H(Q')\right\}
\\
& = D(Q'\|P) = -\log P(x') = -\log{\max_{x\in\m{X}} P(x)}
\end{align*}
and on the other hand,
\[
\begin{multlined}
\shoveleft{H_\infty(P) \geq \lim_{\alpha\rightarrow\infty}\left(\min_{Q\ll P}\left\{D(Q\|P) + H(Q)\right\} \right.}
\\
\shoveright{\left.\qquad\qquad+ \min_{Q\ll P}\left\{\frac{D(Q\|P)}{\alpha-1} \right\}\right)}
\\
\shoveleft[\widthof{$H_\infty(P)$}]{= \min_{Q\ll P}\left\{D(Q\|P) + H(Q)\right\}}
\\
\shoveleft[\widthof{$H_\infty(P)$}]{= \min_{Q\ll P}\left(-\sum_{x\in\m{X}}Q(x)\log P(x) \right)}
\\
\shoveleft[\widthof{$H_\infty(P)$}]{= -\log{\max_{x\in\m{X}} P(x)}\,.}\\
\end{multlined}
\]\vspace{-20pt}

\item $H_1(P) = H(P)$\label{prop:H_1}

\qquad\textit{Proof:} We consider the limit $\alpha\rightarrow 1^+$, the other limit follows similarly and coincides. We have already seen that for $\alpha>1$, $H_\alpha(P) \leq G_\alpha(P;P) = H(P)$. Intuitively, $Q=P$ must be set in $G_\alpha$ as above, since otherwise the divergence terms blows up. Precisely, fix some $r>H(P)$ and define $M_\alpha\dfn \{Q : \frac{\alpha}{\alpha-1}D(Q\|P) \leq r\}$. Then
\begin{align*}
H_\alpha(P)  &= \lim_{\alpha\rightarrow 1^+} \inf_{Q\in M_\alpha}\left\{\frac{\alpha}{\alpha-1}D(Q\|P)+H(Q)\right\}
\\
&\geq \lim_{\alpha\rightarrow 1^+} \inf_{Q\in M_\alpha}H(Q) = H(P)\,.
\end{align*}
where the last equality holds since $\sup_{Q\in M_\alpha}D(Q\|P)\rightarrow 0$ as $\alpha\rightarrow 1^+$.

\item The general inequality $H_\alpha(P) \leq \frac{\alpha}{\alpha-1}D(Q\|P) + H(Q)$ for $\alpha>1$ and $Q\ll P$ (and its reversed counterpart for $\alpha<1$) is equivalent to the \textit{log-sum inequality}. Specifically, a uniform $Q$ corresponds to the \textit{arithmetic-geometric mean inequality}.

\qquad\textit{Proof:} By direct computation.

\item Let $\ell:\m{X}\mapsto\NaturalF$ be a \textit{codelength assignment} associated with some uniquely decodable code for $P$. Define the exponentially weighted average codelength with parameter $\lambda>0$ for associated with $(P,\ell)$ to be\footnote{Note that $\lambda\rightarrow 0$ yields the usual average codelength criterion, and $\lambda\rightarrow\infty$ yields the maximal codelength criterion.}
    \begin{equation}\label{eq:min_cl}
    \m{L}_\lambda(P,\ell) \dfn \frac{1}{\lambda}\log\sum_{x\in\m{X}}P(x)2^{\lambda\ell(x)}\,.
    \end{equation}
    Then the optimal codelength satisfies:
    \begin{equation*}
    H_\frac{1}{1+\lambda}(P) \leq \min_{\ell} \m{L}_\lambda(P,\ell) \leq H_\frac{1}{1+\lambda}(P) + 1\,.
    \end{equation*}

\qquad\textit{Proof:} We reestablish this result from \cite{Campbell65} via our approach. Define the probability distribution $R(x) \dfn 2^-{\ell(x)}\slash c$, where $c\dfn \sum_x 2^{-\ell(x)}\leq 1$ by Kraft's inequality. Then
\begin{equation}
\m{L}_\lambda(P,\ell) = -\log{c} + \frac{1}{\lambda}\log\sum_{x\in\m{X}}P(x)R(x)^{-\lambda}\,.
\end{equation}
Let $\m{\wh{L}}_\lambda(P,R)$ be the second summand above. When minimizing over all distributions $R$, it is clearly sufficient to take the infimum over those with $S(R)=S(P)$, which for brevity will be implicit below. Hence:
\begin{align*}
&\min_{R\in\ms{P}(\m{X})}\m{\wh{L}}_\lambda(P,R) = \inf_R\m{\wh{L}}_\lambda(P,R)
\\
&\quad\stackrel{(a)}{=} \inf_R\max_{Q\ll P} \left\{-\lambda^{-1}D(Q\|P) + D(Q\|R) + H(Q)\right\}
\\
&\quad\stackrel{(b)}{=} \max_{Q\ll P} \inf_R\left\{-\lambda^{-1}D(Q\|P) + D(Q\|R) + H(Q)\right\}
\\
&\quad\stackrel{(c)}{=} \max_{Q\ll P} \left\{-\lambda^{-1}D(Q\|P) + H(Q)\right\} = H_{\frac{1}{1+\lambda}}(P)\,.
\end{align*}
The equalities are justified as follows:
\begin{enumerate}[(a)]
\item on account of (\ref{eq:J}), by setting $\alpha=1$ and $\beta=-\lambda$.
\item the objective function is concave\footnote{The first summand is concave in $Q$, while the sum of the last two is linear.} and continuous in $Q$ over the compact set $\ms{P}(S(P))$ for any fixed $R$, and convex in $R$ for any fixed $Q$. Hence, $\max$ and $\inf$ can be interchanged \cite[Theorem 4.2]{Sion1958}.
\item by virtue of Theorem \ref{thrm}.
\end{enumerate}
This immediately establishes the lower bound. The associated saddle point is therefore $(Q^*,Q^*)$, where $Q^*$ is the optimizing distribution for $H_{\frac{1}{1+\lambda}}(P)$, hence $\m{\wh{L}}_\lambda(P,Q^*) = H_{\frac{1}{1+\lambda}}(P)$. Plugging $\ell(x) = \lceil-\log Q^*(x)\rceil $ in (\ref{eq:min_cl}) establishes the upper bound.

\item\label{prop:opt_D} The unique optimizing distribution for $G_\alpha(P; Q)$ is
\begin{equation*}
Q^*(x) = \frac{P(x)^\alpha}{\sum_{x\in\m{X}}P(x)^\alpha}\,.
\end{equation*}

\qquad\textit{Proof:} Verify by substitution that $G_\alpha(P; Q^*) = H_\alpha(P)$. Uniqueness follows from strict convexity (resp. concavity) of $G_\alpha(P; Q)$ in $Q$ over $\ms{P}(S(P))$ for $\alpha>1$ (resp. $\alpha<1$).

\item (\textit{Approximate recursivity})
Suppose $P'$ is obtained from $P$ by combining the symbols $x_1,x_2$ (with probabilities $P(x_1)=p_1$ and $P(x_2) = p_2$) into a single symbol $x_1$, i.e., $P'(x_1) = p_1+p_2$ and $P'(x_2)=0$, while retaining all other probabilities. Then\footnote{For binary distributions $P = (p,1-p)$ and $Q=(q,1-q)$, we write $H_\alpha(p) = H_\alpha(P)$ and $D_\alpha(p\|q) = D_\alpha(P\|Q)$.}
\begin{equation*}
H_\alpha(P)  = H_\alpha(P') + c\cdot H_\alpha\left(\frac{p_1}{p_1+p_2}\right)
\end{equation*}
where $c$ satisfies
\begin{equation}\label{eq:c_lb}
\left(p_1^\alpha+p_2^\alpha\right)\cdot 2^{(\alpha-1)H_\alpha(P)} \leq c\leq (p_1+p_2)^\alpha\cdot 2^{(\alpha-1)H_\alpha(P')}
\end{equation}
for $\alpha>1$, and the reversed inequalities for $\alpha<1$. Note that $0\leq c \leq 1$, and $c\rightarrow p_1+p_2$ as $\alpha\rightarrow 1$.

\qquad\textit{Proof:} We prove for $\alpha>1$, the derivation for $\alpha<1$ is similar with the inequalities reversed. Let $Q^*$ minimize $G_\alpha(P;Q)$, and write $Q^*(x_1)=q_1^*, Q^*(x_1)=q_2^*$. Let $Q'$ be obtained from $Q^*$ by combining $x_1,x_2$ as above. Then:
\[
\begin{multlined}
H_\alpha(P') \leq G_\alpha(P';Q')
\\
\shoveleft{= \frac{\alpha}{\alpha-1}\!\left(\!D(P\|Q^*) \!- \! (q^*_1+q^*_2)D\!\left(\frac{q^*_1}{q^*_1+q^*_2}\|\frac{p_1}{p_1+p_2}\right)\!\right)}
\\
\shoveright{+ H(Q^*) - (q^*_1+q^*_2)H\left(\frac{q^*_1}{q^*_1+q^*_2}\right)}
\\
\shoveleft{\leq  H_\alpha(P) - (q_1^*+q_2^*)H_\alpha\left(\frac{p_1}{p_1+p_2}\right)\,.}\\
\end{multlined}\vspace{-10pt}
\]
The recursivity properties of the Shannon entropy and the Kullback-Leibler divergence were used in the equality transition. The last inequality follows by applying Theorem \ref{thrm} twice, and using the definition of $Q^*$. Appealing to Property \ref{subsec:D}.\ref{prop:opt_D} above, the lower bound in (\ref{eq:c_lb}) is established.

For the upper bound, let $Q'^*$ minimize $G_\alpha(P';Q)$. Let the distribution $Q$ be obtained from $Q'^*$ by splitting the probability $Q'^*(x_1)$ between $x_1$ and $x_2$ such that $\frac{Q(x_1)}{Q(x_1)+Q(x_2)} = \frac{p_1^\alpha}{p_1^\alpha+p_2^\alpha}$, while retaining all other probabilities. The bound follows by expanding the inequality $H_\alpha(P)\leq G_\alpha(P;Q)$ as above, using recursivity, Theorem \ref{thrm} and Property \ref{subsec:D}.\ref{prop:opt_D}.

\end{enumerate}

\subsection{$D_\alpha(P_1\|P_2)$}\label{subsec:D}
For convenience, define:
\begin{align*}
G_\alpha(P_1,P_2; Q) &\dfn \frac{\alpha}{1-\alpha}D(Q\|P_1) + D(Q\|P_2)\,.
\end{align*}
We will repeatedly use the fact that by Theorem \ref{thrm}, $G_\alpha(P_1,P_2; Q)$ is a lower  (resp. upper) bound for $D_\alpha(P_1\|P_2)$, for $\alpha>1$ (resp. $\alpha<1$) and any $Q\ll P_1$.

\begin{enumerate}[1.]
\item \textit{$D_\alpha(P_1\|P_2)$ is an increasing function of $\alpha$.}

\qquad\textit{Proof:} Similar to Property \ref{subsec:H}.\ref{item:H_dec}, by noting that $G_\alpha(P_1,P_2; P_1) = D(P_1\|P_2)$.

\item $D_\alpha(P_1\|P_2) \geq 0$ with equality if and only if $P_1=P_2$.

\qquad\textit{Proof:} For $\alpha<1$ this follows immediately from Theorem \ref{thrm} using the same property of $D(P_1\|P_2)$. For $\alpha>1$ use also the monotonicity property above.

\item $D_\alpha(P_1\|P_2)$ is convex in $P_2$ for $\alpha>1$ and any fixed $P_1$, and is convex in the pair $(P_1,P_2)$ for  $\alpha<1$. \label{prop:D_a_convex}

\qquad\textit{Proof:} $D(Q\|P_2)$ is convex in $P_2$ for any fixed $Q$, hence so is $G_\alpha(P_1,P_2; Q)$. The statement for $\alpha>1$ follows since a pointwise maximum of convex functions is convex. For $\alpha<1$, the convexity of $D(Q\|P_1)$ in $(Q,P_1)$ and of $D(Q\|P_2)$ in $(Q,P_2)$ implies that $G_\alpha(P_1,P_2; Q)$ is convex in $(P_1,P_2,Q)$. The result now follows since minimizing a convex function over a convex set ($\ms{P}(S(P_1))$ in this case) preserves convexity.

\item $D_0(P_1\|P_2) = -\log{P_2(S(P_1))}$.

\textit{Proof:} Let $Q'$ be $P_2$ restricted to $S(P_1)$, with the proper normalization. Then on the one hand,
\begin{align*}
D_0(P_1\|P_2) &\leq  \lim_{\alpha\rightarrow 0}\left(\frac{\alpha}{\alpha-1}D(Q'\|P_1) + D(Q'\|P_2)\right)
\\
&= D(Q'\|P_2) = -\log{P_2(S(P_1))}
\end{align*}
and on the other hand,
\begin{align*}
D_0(P_1\|P_2) &= \lim_{\alpha\rightarrow 0}\min_{Q\ll P_1}\left\{\frac{\alpha}{1-\alpha}D(Q\|P_1) + D(Q\|P_2)\right\}
\\
&\geq \min_{Q\ll P_1}D(Q\|P_2) = D(Q'\|P_2)
\\
&= -\log{P_2(S(P_1))}\,.
\end{align*}

\item $D_\infty(P_1\|P_2) = \log{\max_{x\in S(P_2)} \frac{P_1(x)}{P_2(x)}}$

\qquad\textit{Proof:} Let $Q'(x')=1$, where $x'\in\m{X}$ satisfies $P_1(x')\slash P_2(x')=\max_{x\in S(P_2)}\left(P_1(x)\slash P_2(x)\right)$.  The proof is now similar to that of Property \ref{subsec:H}.\ref{prop:H_infty}.

\item $D_1(P_1\|P_2) = D(P_1\|P_2)$

\quad \textit{Proof:} $Q=P_1$ must be set to avoid a blowup of the first divergence term in $G_\alpha(P_1,P_2; Q)$. The proof is similar to that of Property \ref{subsec:H}.\ref{prop:H_1}.

\item (\textit{Data Processing Inequality}) For any pair of distributions \label{prop:DP_div} $P_1,P_2\in\ms{P}(\m{X})$ and channel $W:\m{X}\mapsto\m{Y}$,
    \begin{equation*}
    D_\alpha(P_1W\|P_2W)\leq D_\alpha(P_1\|P_2)\,.
    \end{equation*}

\qquad\textit{Proof:} We prove only for $\alpha<1$.\footnote{This holds for any $\alpha>0$, however the case of $\alpha>1$ does not seem to follow elegantly from our representation, and can be proved directly.} Let $Q^*$ minimize $G_\alpha(P_1,P_2;Q)$. Write:
\begin{align*}
D_\alpha&(P_1W\|P_2W) \leq G_\alpha(P_1W,P_2W; Q^*W)
\\
&= \frac{\alpha}{1-\alpha}D(Q^*W\|P_1W) + D(Q^*W\|P_2W)
\\
&\leq \frac{\alpha}{1-\alpha}D(Q^*\|P_1) + D(Q^*\|P_2)  = D_\alpha(P_1\|P_2)\,.
\end{align*}
The data processing inequality for the Kullback-Leibler divergence \cite{csiszar_korner} was used in the last inequality.

\item The unique optimizing distribution for $G_\alpha(P_1,P_2;Q)$ is
\begin{equation*}
Q^*(x) = \frac{P_1(x)^\alpha P_2(x)^{1-\alpha}}{\sum_{x\in\m{X}}P_1(x)^\alpha P_2(x)^{1-\alpha}}\,.
\end{equation*}

\qquad\textit{Proof:} Verify by substitution that $G_\alpha(P_1,P_2; Q^*) = D_\alpha(P_1\|P_2)$. Uniqueness follows from strict concavity (resp. convexity) of $G_\alpha(P_1,P_2; Q)$ in $Q$ over $\ms{P}(S(P_1))$ for $\alpha>1$ (resp. $\alpha<1$).

\end{enumerate}

\subsection{$I_\alpha(P,W)$, $K_\alpha(P,W)$ and $C_\alpha(W)$}\label{subsec:C}

\begin{enumerate}[1.]

\item $K_\alpha(P,W)\leq I_\alpha(P,W)$ for $\alpha>1$, and $K_\alpha(P,W)\geq I_\alpha(P,W)$ for $\alpha>1$.

\qquad\textit{Proof:} Immediate from Theorem \ref{thrm} by substituting $Q=P$ in the expressions for $K_\alpha(P,W)$.

\item $I_\alpha(P,W) \leq H(P)$ and $K_\alpha(P,W) \leq  H_\frac{1}{\alpha}(P)$, with equality if and only if $I(P,W)=H(P)$.

\qquad\textit{Proof:} From (\ref{eq:I_a}) we have that for $\alpha>1$
\begin{equation*}
I_\alpha(P,W) \leq \max_V I(P,V) = H(P)
\end{equation*}
A necessary and sufficient condition for an equality is $I(P,V)=H(P)$ and $D(V\|W|P) = 0$ for some $V$, implying $P\circ W = P\circ V$, hence the first assertion. Using this inequality in (\ref{eq:K_a}), along with the $\max$ counterpart of (\ref{eq:H_a}), yields
\begin{align*}
K_\alpha(P,W) \leq  \max_{P'}\left\{H(P')+\frac{1}{1-\alpha}D(P'\|P)\right\}\!= \!H_\frac{1}{\alpha}(P)
\end{align*}
which inherits the same equality condition, hence the second assertion. For $\alpha<1$, substituting $V=W$ in the $\min$ counterpart of (\ref{eq:I_a}) yields
\begin{equation*}
I_\alpha(P,W) \leq I(P,W) \leq H(P)
\end{equation*}
If $I(P,W)=H(P)$ then $I(P,V)=H(P)$ for the minimizing $V$, hence $V(\cdot|x)=W(\cdot|x)$ for $x\in S(P)$ is optimal. The other direction is trivial, hence the first assertion. The second assertion follows similarly as above.

\item $I_\alpha(P,W)$ is concave in $P$ for any fixed $W$ and any $\alpha$, and is convex in $W$ for any fixed $P$ and $\alpha<1$.

\qquad\textit{Proof:} In this case working directly with (\ref{eq:I_a_def}) is much easier. Concavity in $P$ follows as a pointwise minimum of concave (in fact linear) functions in $P$. Convexity in $W$ for $\alpha<1$ follows (using Property \ref{subsec:D}.\ref{prop:D_a_convex}) as a minimization of a convex function in $(Q,W)$ over a convex set.

\item For $\alpha>1$, $K_\alpha(P,W)$ is concave in $P$ for any fixed $W$, and convex in $W$ for any fixed $P$.

\qquad\textit{Proof:} Using (\ref{eq:K_a}) and the previous property, concavity in $P$ follows as a maximization of concave functions in $(P,Q)$ over a convex set. Convexity in $W$ follows as a pointwise maximum of convex functions in $W$.

\item $C_\alpha(W) = \max_P K_\alpha(P,W)$.

\qquad\textit{Proof:} For $\alpha>1$ this is immediate from (\ref{eq:K_a}). The case of $\alpha<1$ does not follow simply from our representation, see \cite{Csiszar95}.

\item (\textit{Data Processing Inequality}) For any distribution $P\in\ms{P}(\m{X})$ and channels $W_1:\m{X}\mapsto\m{Y}, W_2:\m{Y}\mapsto\m{Z}$,
    \begin{align*}
    I_\alpha(P,W_1W_2) &\leq I_\alpha(P,W_1)\\
    K_\alpha(P,W_1W_2) &\leq K_\alpha(P,W_1)
    \end{align*}
    where $W_1W_2$ is the concatenation of the channels $W_1$ and $W_2$, i.e., $(W_1W_2)(z|x) \dfn \sum_yW_2(z|y)W_1(y|x)$.

\qquad\textit{Proof:} Similar to that of Property
\ref{subsec:D}.\ref{prop:DP_div}.

\end{enumerate}

\section{A Composite Hypothesis Testing Problem}\label{sec:hyp}
Suppose two sensors monitor the occurrence of some phenomena. The sensors may generally have different sampling rates with some ratio $\lambda>0$, i.e., for each sample provided by Sensor $1$, $\lambda$ samples are provided by Sensor $2$. When the phenomena is present, it is observed at Sensor $1$ as i.i.d. samples from an unknown distribution $P_1$ in some given family $\mb{P_1}\subseteq \ms{P}(\m{X})$, and at Sensor $2$ as i.i.d. samples from an unknown distribution $P_2$ in some given family $\mb{P_2}\subseteq \ms{P}(\m{X})$. When the phenomena is absent, both sensors observe i.i.d. samples from a common unknown ``ambient noise'' distribution $Q$ in some given family $\mb{Q}\subseteq\ms{P}(\m{X})$. The samples obtained form the sensors are assumed to be mutually independent under each hypothesis.

Suppose we are given $n$ samples from the two Sensors together, where the first $n_1$ samples are from Sensor~$1$, and the last $n_2 = \lambda n_1$ samples\footnote{For brevity, we disregard integer issues.}  are from Sensor~$2$. A \textit{decision rule} corresponds to a set $\Omega_n\subseteq\m{X}^n$, which is allowed to be a function of the families $\mb{P_1},\mb{P_2},\mb{Q}$, but not of the actual $(P_1,P_2,Q)$. The decision rule declares ``phenomena'' if the sample vector lies in $\Omega_n$, and ``no phenomena'' otherwise. The \textit{miss-detection} and \textit{false-alarm} error probabilities associated with $\Omega_n$ and a triplet $(P_1,P_2,Q)$ are
\begin{align*}
p_{\rm \scriptscriptstyle MD}(\Omega_n|P_1,P_2) &\dfn P^{(n)}\left(\m{X}^n\setminus \Omega_n\right)
\\
p_{\scriptscriptstyle FA}(\Omega_n|Q) &\dfn Q^n(\Omega_n)
\end{align*}
where $P^{(n)}\dfn P_1^{n_1}\times P_2^{n_2}$. The \textit{miss-detection exponent} associated with a sequence $\Omega=\{\Omega_n\}_{n=1}^\infty$ of decision rules is
\begin{align*}
E_{\scriptscriptstyle MD}(\Omega|P_1,P_2) &\dfn \liminf_{n\rightarrow \infty} -\frac{1}{n}\log p_{\scriptscriptstyle MD}(\Omega_n|P_1,P_2)\,.
\end{align*}
We will be interested here in maximizing the worst-case mistedection exponent while guaranteing a vanishing false-alarm probability, over all feasible $(P_1,P_2,Q)$. Namely, we will consider
\begin{align*}
E_{\scriptscriptstyle MD}^* &\dfn \sup_{\Omega\in\ms{F}}\inf_{P_1\in\mb{P_1},P_2\in\mb{P_2}}E_{\scriptscriptstyle MD}(\Omega|P_1,P_2)
\end{align*}
where
\begin{equation*}
\ms{F} \dfn \left\{\Omega : \lim_{n\rightarrow\infty}p_{\scriptscriptstyle FA}(\Omega_n|Q) = 0 \,,\;\forall Q\in\mb{Q}\right\}\,.
\end{equation*}

In what follows, let $\delta_n \dfn \frac{|\m{X}|\log{n}}{n}$, and for any two families $\mb{P},\mb{P'}\subseteq\ms{P}(\m{X})$, define
\begin{equation}
D_\alpha(\mb{P}\|\mb{P'}) \dfn \inf_{P\in\mb{P},P'\in\mb{P'}}D_\alpha(P\|P')\,.
\end{equation}
Furthermore, write $\mb{Q}^*$ for the closure of the family of all distributions of the form
\begin{equation*}
Q^*(x) = \frac{P_1(x)^\frac{1}{1+\lambda} P_2(x)^{\frac{\lambda}{1+\lambda}}}{\sum_{x\in\m{X}}P_1(x)^\frac{1}{1+\lambda} P_2(x)^{\frac{\lambda}{1+\lambda}}}
\end{equation*}
for some $P_1\in\mb{P_1}, P_2\in\mb{P_2}$.

\begin{example}\label{ex:single}
The case where $\lambda=0$ (single sensor) corresponds to a classical setting of composite hypothesis testing. It is well known that in this case \cite{CsiszarShields}
\begin{equation*}
E_{\scriptscriptstyle MD}^*=D(\mb{Q}\|\mb{P_1})
\end{equation*}
which can be achieved by the decision rule
\begin{equation}\label{eq:simple_rule}
\Omega_n = \left\{x^n : \inf_{Q\in\mb{Q}}D(\pi_{x^n}\|Q) \geq \delta_n\right\}\,.
\end{equation}
\end{example}

\begin{example}
If $\mb{P_1}\cap \mb{P_2}\cap \mb{Q}\neq \emptyset$, then $E_{\scriptscriptstyle MD}^* = 0$ for any $\lambda$.
\end{example}
\begin{example}
Suppose $\mb{P_1}$ and $\mb{P_2}$ have disjoint supports, i.e., $S(P_1)\cap S(P_2) = \emptyset$ for all $P_1\in\mb{P_1}$ and $P_2\in\mb{P_2}$. Then $E_{\scriptscriptstyle MD}^* = \infty$ regardless of $\mb{Q}$. This is achieved by a simple decision rule that declares ``phenomena'' when the empirical supports of the samples from the sensors are disjoint, and ``no phenomena'' otherwise. Clearly, this rule has a zero miss-detection probability for any $n$. It is also easy to see that its false-alarm probability tends to zero exponentially for any $Q\in\ms{P}(\m{X})$.
\end{example}

Generally, one would expect the optimal miss-detection exponent to be related to some measure of disparity between the families $\mb{P_1}$ and $\mb{P_2}$, quantifying the fact that the noise $Q$ cannot mimic both $P_1$ and $P_2$ too well at the same time. As it turns out, at least in the worst case sense over the choice of $\mb{Q}$, this measure is related to a R\'enyi divergence between the two families.
\begin{theorem}\label{thrm2}
For any choice of $\mb{P_1,P_2,Q}$ and $\lambda$,
\begin{equation*}
E_{\scriptscriptstyle MD}^*  \geq  \lambda(1+\lambda)^{-1}D_{\frac{1}{1+\lambda}}(\mb{P_1}\|\mb{P_2})
\end{equation*}
with equality if and only if the closure of $\mb{Q}$ has an nonempty intersection with the associated $\mb{Q}^*$.
\end{theorem}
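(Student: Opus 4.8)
The plan is to pin down $E_{\scriptscriptstyle MD}^*$ exactly as a joint infimum and then read off both the bound and its equality case from Theorem~\ref{thrm}. Throughout write $n_1=n/(1+\lambda)$, $n_2=\lambda n/(1+\lambda)$, split each $x^n$ into its Sensor-$1$ block $x^{n_1}$ and Sensor-$2$ block $\wt{x}^{n_2}$ with types $\pi^{(1)},\pi^{(2)}$, and set $\Phi(Q;P_1,P_2)\dfn\frac{1}{1+\lambda}D(Q\|P_1)+\frac{\lambda}{1+\lambda}D(Q\|P_2)$. The algebraic bridge is relation (\ref{eq:D_a}) with $\alpha=\tfrac{1}{1+\lambda}<1$, for which $\tfrac{\alpha}{1-\alpha}=\tfrac{1}{\lambda}$: it gives $\min_Q\Phi(Q;P_1,P_2)=\frac{\lambda}{1+\lambda}D_{\frac{1}{1+\lambda}}(P_1\|P_2)$, attained only at the distribution $Q^*_{P_1,P_2}$ appearing in the definition of $\mb{Q}^*$ (uniqueness is Property~\ref{subsec:D}.8). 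The target is $E_{\scriptscriptstyle MD}^*=\inf\{\Phi(Q;P_1,P_2):P_1\in\mb{P_1},\,P_2\in\mb{P_2},\,Q\in\overline{\mb{Q}}\}$; enlarging the range of $Q$ from $\overline{\mb{Q}}$ to all of $\ms{P}(\m{X})$ only decreases the right-hand side, which by the bridge then equals $\inf_{P_1,P_2}\frac{\lambda}{1+\lambda}D_{\frac{1}{1+\lambda}}(P_1\|P_2)=\frac{\lambda}{1+\lambda}D_{\frac{1}{1+\lambda}}(\mb{P_1}\|\mb{P_2})$, which is exactly the claimed inequality.

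For achievability I would use the rule $\Omega_n=\{x^n:\inf_{Q\in\mb{Q}}[\frac{n_1}{n}D(\pi^{(1)}\|Q)+\frac{n_2}{n}D(\pi^{(2)}\|Q)]\ge\delta_n\}$ with $\delta_n\to0$ chosen to dominate $\frac{1}{n}\log(|\ms{P}^{n_1}(\m{X})|\,|\ms{P}^{n_2}(\m{X})|)$. Feasibility is a Sanov-type estimate: for $Q\in\mb{Q}$, $p_{\scriptscriptstyle FA}(\Omega_n|Q)\le Q^n(\{\frac{n_1}{n}D(\pi^{(1)}\|Q)+\frac{n_2}{n}D(\pi^{(2)}\|Q)\ge\delta_n\})\le|\ms{P}^{n_1}(\m{X})|\,|\ms{P}^{n_2}(\m{X})|\,2^{-n\delta_n}\to0$ by Lemma~\ref{lem:types}(\ref{item:prob_in_type})--(\ref{item:num_of_types}). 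For miss-detection, partition $\m{X}^n\setminus\Omega_n$ by $(\pi^{(1)},\pi^{(2)})$; by Lemma~\ref{lem:types}(\ref{item:prob_in_type}),(\ref{item:type_size}) each product type class has $P^{(n)}$-probability at most $2^{-n[\frac{n_1}{n}D(\pi^{(1)}\|P_1)+\frac{n_2}{n}D(\pi^{(2)}\|P_2)]}$, so $E_{\scriptscriptstyle MD}(\Omega|P_1,P_2)\ge\liminf_n\min\{\frac{n_1}{n}D(\pi^{(1)}\|P_1)+\frac{n_2}{n}D(\pi^{(2)}\|P_2):(\pi^{(1)},\pi^{(2)})\notin\Omega_n\}$. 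Every admissible pair satisfies $\frac{n_1}{n}D(\pi^{(1)}\|Q)+\frac{n_2}{n}D(\pi^{(2)}\|Q)<\delta_n$ for some $Q\in\mb{Q}$, which by Pinsker forces $\pi^{(1)},\pi^{(2)}$ to collapse onto a common point of $\overline{\mb{Q}}$ as $\delta_n\to0$; passing to the limit exactly as in the proof of Theorem~\ref{thrm} (denseness of types, lower semicontinuity of $D$ and its continuity where finite) gives $E_{\scriptscriptstyle MD}(\Omega|P_1,P_2)\ge\inf_{Q\in\overline{\mb{Q}}}\Phi(Q;P_1,P_2)$. Taking the infimum over $P_1,P_2$ and using the bridge yields the stated lower bound on $E_{\scriptscriptstyle MD}^*$.

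For the matching converse, fix $\Omega\in\ms{F}$ and $Q\in\mb{Q}$. Since $Q^n(\m{X}^n\setminus\Omega_n)\to1$ and under $Q^{n_1}\times Q^{n_2}$ the pair $(\pi^{(1)},\pi^{(2)})$ concentrates near $(Q,Q)$, for every large $n$ some product type class $T_{\pi^{(1)}}\times T_{\pi^{(2)}}$ with $\pi^{(1)},\pi^{(2)}$ close to $Q$ is at least half covered by $\m{X}^n\setminus\Omega_n$. Lemma~\ref{lem:types}(\ref{item:prob_in_type}),(\ref{item:type_size}) then bounds $p_{\scriptscriptstyle MD}(\Omega_n|P_1,P_2)$ below by a subexponential factor times $2^{-n[\frac{n_1}{n}D(\pi^{(1)}\|P_1)+\frac{n_2}{n}D(\pi^{(2)}\|P_2)]}$ (those types are supported inside $S(Q)$, so no infinities arise when $Q\ll P_1$ and $Q\ll P_2$), whence $E_{\scriptscriptstyle MD}(\Omega|P_1,P_2)\le\Phi(Q;P_1,P_2)$, the right side possibly infinite. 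Hence $\inf_{P_1,P_2}E_{\scriptscriptstyle MD}(\Omega|P_1,P_2)\le f(Q)\dfn\inf_{P_1\in\mb{P_1},P_2\in\mb{P_2}}\Phi(Q;P_1,P_2)$ for every $Q\in\mb{Q}$, so $E_{\scriptscriptstyle MD}^*\le\inf_{Q\in\mb{Q}}f(Q)$. The function $f$ is continuous: it is lower semicontinuous as an infimum of jointly lower-semicontinuous divergences over the compact $\mb{P_1}\times\mb{P_2}$, and upper semicontinuous because at any $Q$ with $f(Q)<\infty$ an optimal pair $(P_1,P_2)$ has $S(Q)\subseteq S(P_i)$, so $D(\cdot\|P_i)$ is continuous on a neighborhood of $Q$. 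Thus $\inf_{Q\in\mb{Q}}f=\inf_{Q\in\overline{\mb{Q}}}f$, matching the achievability, and $E_{\scriptscriptstyle MD}^*=\inf_{Q\in\overline{\mb{Q}}}f(Q)$.

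It remains to settle the equality condition. Set $d^*\dfn\frac{\lambda}{1+\lambda}D_{\frac{1}{1+\lambda}}(\mb{P_1}\|\mb{P_2})$. The bridge gives $f(Q)\ge d^*$ for every $Q$, so continuity of $f$ on the compact $\overline{\mb{Q}}$ shows $E_{\scriptscriptstyle MD}^*=d^*$ if and only if $\overline{\mb{Q}}$ meets the level set $\{Q:f(Q)=d^*\}$, and the job is to identify that level set with $\mb{Q}^*$. If $f(Q)=d^*$ and $(P_1,P_2)$ attains the infimum defining $f(Q)$, then $\Phi(Q;P_1,P_2)=d^*=\min_{Q'}\Phi(Q';P_1,P_2)$, so $Q=Q^*_{P_1,P_2}$ by the uniqueness in Property~\ref{subsec:D}.8, and moreover $\frac{\lambda}{1+\lambda}D_{\frac{1}{1+\lambda}}(P_1\|P_2)=d^*$, i.e.\ $(P_1,P_2)$ attains $D_{\frac{1}{1+\lambda}}(\mb{P_1}\|\mb{P_2})$; hence $Q\in\mb{Q}^*$. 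Conversely, any $Q^*_{P_1,P_2}$ arising from such an optimizing pair lies in $\{Q:f(Q)=d^*\}$, and so does any limit point of these by continuity of $f$. This gives the ``iff.'' I expect this identification to be the main obstacle: it rests on the strict convexity/uniqueness in Property~\ref{subsec:D}.8 and on checking that the extremal types in both limiting arguments really converge to that unique optimizer, plus the continuity of $f$; the achievability and converse themselves are routine type-counting in the spirit of Example~\ref{ex:single} and of the proof of Theorem~\ref{thrm}.
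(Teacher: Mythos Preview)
Your strategy matches the paper's: establish the exact formula
\[
E_{\scriptscriptstyle MD}^* \;=\; (1+\lambda)^{-1}\inf_{Q\in\mb{Q}}\bigl(D(Q\|\mb{P_1}) + \lambda D(Q\|\mb{P_2})\bigr)
\]
by a type-counting achievability and a Stein-type converse, then invoke Theorem~\ref{thrm} with $\alpha=\tfrac{1}{1+\lambda}$ to read off the R\'enyi bound and its equality case. The one noteworthy difference is the decision rule: you threshold the \emph{weighted sum} $\tfrac{n_1}{n}D(\pi^{(1)}\|Q)+\tfrac{n_2}{n}D(\pi^{(2)}\|Q)$, whereas the paper thresholds $\max\{D(\pi^{(1)}\|Q),D(\pi^{(2)}\|Q)\}$. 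Both work; yours is arguably the more natural choice here since its Sanov estimate directly produces the exponent $n\delta_n$, while the paper's $\max$-rule gives a slightly cleaner per-sensor false-alarm bound but forces $\delta'_n=\max(\delta_{n_1},\delta_{n_2})$. Your converse for a fixed $Q\in\mb{Q}$ is essentially identical to the paper's.

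Where you go beyond the paper is in the equality analysis: the paper dispatches it in one line (``Property~\ref{subsec:D}.8 verifies the necessary and sufficient conditions''), whereas you try to identify the level set $\{Q:f(Q)=d^*\}$ with $\mb{Q}^*$ explicitly. Two caveats. First, your continuity argument for $f$ tacitly assumes $\mb{P_1}\times\mb{P_2}$ is compact (so that minimizing pairs exist and subsequential limits are available), but the families are not assumed closed; you would need to pass to closures or add that hypothesis. Second, $\mb{Q}^*$ as defined in the paper is the closure of $\{Q^*_{P_1,P_2}\}$ over \emph{all} pairs $(P_1,P_2)\in\mb{P_1}\times\mb{P_2}$, not just those attaining $D_{\frac{1}{1+\lambda}}(\mb{P_1}\|\mb{P_2})$; your ``converse inclusion'' only places the optimizers' tilts into the level set, which can be strictly smaller than $\mb{Q}^*$. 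This is arguably an imprecision in the theorem's stated ``iff'' rather than a flaw in your reasoning, but be aware that your argument proves a sharper equality condition than the one announced.
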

\begin{proof}
Consider first the case where $\mb{Q}=\{Q\}$. Let us show that
\begin{equation*}
E_{\scriptscriptstyle MD}^*=(1+\lambda)^{-1}\left(D(Q\|\mb{P_1}) + \lambda D(Q\|\mb{P_2})\right)\,.
\end{equation*}
Achievability follows by letting $\Omega_{n_1}^{(1)}$ and $\Omega_{n_2}^{(2)}$ be the optimal per-sensor decision rules as in (\ref{eq:simple_rule}), and setting
\begin{equation}\label{eq:dec_rule1}
\Omega_n \dfn \left\{(x^{n_1},y^{n_2}) : x^{n_1}\in \Omega_{n_1}^{(1)} \;\;\text{or}\;\;y^{n_2}\in \Omega_{n_2}^{(2)}\right\}\,.
\end{equation}
The converse is a simple generalization of the standard single-sensor case \cite{CsiszarShields}. Let $\Omega'=\{\Omega_n'\}$ be any sequence of decision rules achieving a vanishing false-alarm probability. For $i\in\{1,2\}$, let $\Gamma_{n_i}$ denote the union of all $n_i$-dimensional type classes $T_{Q_i}$ where $Q_i\in\ms{P}^{n_i}(\m{X})$ satisfies $D(Q_i\|Q) \leq \delta_{n_i}$. By Lemma \ref{lem:types} property (\ref{item:large_dev}), we have $Q^n(\Gamma_{n_1}\times \Gamma_{n_2})\rightarrow 1$ as $n\rightarrow\infty$. Since by our assumption $Q^n(\m{X}^n\setminus \Omega_n')\rightarrow 1$, then $Q^n((\Gamma_{n_1}\times \Gamma_{n_2})\setminus \Omega_n')\geq \frac{1}{2}$ (say) for any $n$ large enough. Thus, there must exist a pair of types $(Q_{1,n},Q_{2,n})\in \Gamma_{n_1}\times \Gamma_{n_2}$ such that $Q^n((T_{Q_{1,n}}\times T_{Q_{2,n}})\setminus \Omega_n')\geq \frac{1}{2}Q^n(T_{Q_{1,n}}\times T_{Q_{2,n}})$. Since both $Q^n$ and $P^{(n)}$ are constant over $T_{Q_{1,n}}\times T_{Q_{2,n}}$, the same inequality holds for $P^{(n)}$. Therefore,
\begin{align*}
\begin{multlined}
-\frac{1}{n}\log P^{(n)}(\m{X}^n\setminus\Omega_n')
\\
\shoveleft[30pt]{\leq -\frac{1}{n}\log P^{(n)}((T_{Q_{1,n}}\times T_{Q_{2,n}})\setminus \Omega_n')}
\\
\shoveleft[30pt]{\leq -\frac{1}{n}\log \frac{1}{2}P^{(n)}(T_{Q_{1,n}}\times T_{Q_{2,n}})}
\\
\shoveleft[30pt]{\leq  (1+\lambda)^{-1}\left(D(Q_{1,n}\|P_1) + \lambda D(Q_{2,n}\|P_2)\right)}
\\
\shoveright{+\frac{1+2|\m{X}|\log{(n+1)}}{n}}\\ \vspace{-15pt}
\end{multlined}
\end{align*}
where properties (\ref{item:prob_in_type})-(\ref{item:num_of_types}) of Lemma \ref{lem:types} were used in the last inequality. Letting $n\rightarrow\infty$, and recalling that $D(Q_{i,n}\|Q)\rightarrow 0$ which implies $D(Q_{i,n}\|P_i)\rightarrow D(Q\|P_i)$, the converse follows.

As a result, it is now clear that for a general $\mb{Q}$
\begin{equation}\label{eq:E_md1}
E_{\scriptscriptstyle MD}^* \leq (1+\lambda)^{-1}\inf_{Q\in\mb{Q}}\left(D(Q\|\mb{P_1}) + \lambda D(Q\|\mb{P_2})\right)\,.
\end{equation}
The decision rule (\ref{eq:dec_rule1}) above (with $\Omega_{n_1}^{(1)}$ and $\Omega_{n_2}^{(2)}$ now taking the infimum over the family $\mb{Q}\,$) will generally fail to achieve the upper bound in (\ref{eq:E_md1}), and may even not attain a vanishing miss-detection probability. For instance, if $\mb{P_1}=\{P_1\}$, $\mb{P_2}=\{P_2\}$ and $\mb{Q}=\{P_1,P_2\}$, then $p_{\rm \scriptscriptstyle MD}(\Omega_n|P_1,P_2) \rightarrow 1$, whereas the upper bound (\ref{eq:E_md1}) is positive if $P_1\neq P_2$. Clearly, the problem is that each sensor makes its own binary decision before those are combined, not taking into account that $Q$ is common. This shortcoming is easily corrected by the following modified decision rule:
\begin{align*}
\wt{\Omega}_n \!= &\left\{(\!x^{n_1},y^{n_2}) : \!\inf_{Q\in\mathbf{Q}}\!\max \left\{D(\pi_{x^{n_1}}\|Q),D(\pi_{y^{n_2}}\|Q)\right\} \geq \delta'_n\!\right\}\
\end{align*}
where $\delta'_n = \max(\delta_{n_1},\delta_{n_2})$.

Let us show that this rule attains the upper bound in (\ref{eq:E_md1}). For any $Q\in\mb{Q}$, $\wt{\Omega}_n$ is contained in the set of all vectors $(x^{n_1},y^{n_2})$ for which either $D(\pi_{x^{n_1}}\|Q) \geq \delta'_n$ or $D(\pi_{y^{n_2}}\|Q) \geq \delta'_n$. Thus, using Lemma \ref{lem:types} property (\ref{item:large_dev}) together with the union bound, we obtain
\begin{align*}
p_{\scriptscriptstyle FA}(\wt{\Omega}_n|Q) &\leq |\ms{P}^{n_1}(\m{X})|2^{-n_1\delta'_n} + |\ms{P}^{n_2}(\m{X})|2^{-n_2\delta'_n}
\\
&\leq \!{n_1+|\m{X}|-1 \choose |\m{X}|-1}n_1^{-|\m{X}|} \!+\! {n_2+|\m{X}|-1 \choose |\m{X}|-1}n_2^{-|\m{X}|}
\end{align*}
hence $p_{\scriptscriptstyle FA}(\wt{\Omega}_n|Q)\rightarrow 0$ as $n\rightarrow\infty$, for any $Q\in\mb{Q}$.

Define the set $\Pi_n\subseteq\ms{P}^{n_1}(\m{X})\times \ms{P}^{n_2}(\m{X})$ of all the type pairs $(Q_1,Q_2)$ for which there exists some $Q\in\mb{Q}$ such that $D(Q_1\|Q) < \delta'_n$ and $D(Q_2\|Q) < \delta'_n$. By definition, $\m{X}^n\setminus \wt{\Omega}_n$ is a union of all type classes products pertaining to $\Pi_n$. Therefore, using properties (\ref{item:prob_in_type})-(\ref{item:large_dev}) of Lemma \ref{lem:types} again, we get
\begin{align*}
\begin{multlined}
-\frac{1}{n}\log P^{(n)}(\m{X}^n\setminus\wt{\Omega}_n)
\\
\shoveleft[30pt]{=  -\frac{1}{n}\log\sum_{(Q_1,Q_2)\in\Pi_n}P_1^{n_1}\left(T_{Q_1}\right)\cdot P_2^{n_2}\left(T_{Q_2}\right)}
\\
\shoveleft[30pt]{\geq  (1+\lambda)^{-1}\hspace{-14pt}\min_{(Q_1,Q_2)\in\Pi_n}\hspace{-5pt}\left(D(Q_1\|P_1) + \lambda D(Q_2\|P_2)\right)}\\
\shoveright{-\frac{2|\m{X}|\log{(n+1)}}{n}\,.}\\ \vspace{-15pt}
\end{multlined}
\end{align*}
Let $(Q_{1,n},Q_{2,n})$ achieve the minimum above. Then by definition there exists $Q_n\in\mb{Q}$ such that $D(Q_{i,n}\|Q_n)<\delta'_n\rightarrow 0$ for $i\in\{1,2\}$, which implies that $D(Q_{n,i}\|P_i) \rightarrow D(Q_n\|P_i)$. Hence for any $P_1\in\mb{P_1},P_2\in\mb{P_2}$,
\begin{equation*}
E_{\scriptscriptstyle MD}(\wt{\Omega}|P_1,P_2) \geq (1+\lambda)^{-1}\inf_{Q\in\mb{Q}}\left(D(Q\|P_1) + \lambda D(Q\|P_2)\right)
\end{equation*}
Therefore, $\wt{\Omega}$ attains the upper bound in (\ref{eq:E_md1}), and thus
\begin{align}\label{eq:E_md2}
E_{\scriptscriptstyle MD}^* &= (1+\lambda)^{-1}\inf_{Q\in\mb{Q}}\left(D(Q\|\mb{P_1}) + \lambda D(Q\|\mb{P_2})\right)
\\
\nonumber &\geq (1+\lambda)^{-1}\min_{Q\in\ms{P}(\m{X})}\left(D(Q\|\mb{P_1}) + \lambda D(Q\|\mb{P_2})\right)
\\
\nonumber & = \lambda(1+\lambda)^{-1}D_{\frac{1}{1+\lambda}}(\mb{P_1}\|\mb{P_2})
\end{align}
where the inequality is on account of Theorem \ref{thrm}.\footnote{Note that for the $\alpha<1$ counterpart of (\ref{eq:D_a}), minimizing over $Q\in\ms{P}(\m{X})$ instead of $Q\ll P_1$ changes nothing.} Property \ref{subsec:D}.\ref{prop:opt_D} verifies the necessary and sufficient conditions for an equality.
\end{proof}

The lower bound in Theorem \ref{thrm2} is independent of the noise family $\mb{Q}$, hence the R\'enyi divergence between the families $\mb{P_1}$ and $\mb{P_2}$ admits an operational interpretation as the optimal worst-case miss-detection exponent (up to a constant) when the noise distribution $Q$ is completely unknown (i.e., $\mb{Q}=\ms{P}(\m{X})$), or more generally, when $Q$ can take values in the ``worst noise'' set $\mb{Q}^*$. In other cases this serves only as a lower bound, and the strictly larger exponent is given by (\ref{eq:E_md2}). It is possible (somewhat artificially) to interpret this exponent as a (limit of a) generalized form of the R\'enyi divergence, taking into account also the family $\mb{Q}$, as we now proceed to show.

Let $(\alpha_1,\ldots,\alpha_{k+1})$ be a probability vector, and write $\underline{\alpha}\dfn (\alpha_1,\ldots,\alpha_k)$. Let $\{P_1,\ldots,P_{k+1}\}$ be distributions over $\ms{P}(\m{X})$. We define the associated \textit{generalized R\'enyi divergence of order $\underline{\alpha}$} to be
\begin{equation*}
D_{\underline{\alpha}}(P_1,\ldots,P_{k+1}) \dfn -\log \left(\sum_{x\in\m{X}}\prod_{i=1}^{k+1}P_i(x)^{\alpha_i}\right)\,.
\end{equation*}
For families of distributions $\{\mb{P_1},\ldots,\mb{P_{k+1}}\}$, we define
\begin{equation*}
D_{\underline{\alpha}}(\mb{P_1},\ldots,\mb{P_{k+1}}) \dfn \inf_{\{P_i\in\mb{P_i}\}}D_{\underline{\alpha}}(P_1,\ldots,P_{k+1})\,.
\end{equation*}
Additivity of the generalized R\'enyi divergence is easily verified, which leads to
\begin{corollary}
\begin{equation*}
D_{\underline{\alpha}}(P_1,\ldots,P_{k+1})  = \min_{Q\in\ms{P}(\m{X})}\sum_{j=1}^{k+1}\alpha_jD(Q\|P_j)\,.
\end{equation*}
\end{corollary}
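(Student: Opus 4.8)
The plan is to run, essentially verbatim, the method-of-types argument that produced the functional $J_{\alpha,\beta}$ in the proof of Theorem~\ref{thrm}; the additivity recorded just above is precisely the ingredient that sets this up. The one structural simplification is that $(\alpha_1,\dots,\alpha_{k+1})$ is a \emph{probability} vector, so the entropy contributions cancel: for a type $Q\in\ms{P}^n(\m{X})$ and any $x^n\in T_Q$, property~(\ref{item:prob_in_type}) of Lemma~\ref{lem:types} gives $P_j^n(x^n)=2^{-n(D(Q\|P_j)+H(Q))}$, hence, since $\sum_j\alpha_j=1$,
\[
\prod_{j=1}^{k+1}P_j^n(x^n)^{\alpha_j}=2^{-n\left(\sum_{j=1}^{k+1}\alpha_jD(Q\|P_j)+H(Q)\right)}.
\]

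I would then sum this identity over type classes, sandwich $|T_Q|$ using property~(\ref{item:type_size}) and bound the number of types using property~(\ref{item:num_of_types}), so that $\sum_{x^n}\prod_jP_j^n(x^n)^{\alpha_j}$ lies between $(n+1)^{-|\m{X}|}$ and $(n+1)^{|\m{X}|}$ times $\max_{Q\in\ms{P}^n(\m{X})}2^{-n\sum_j\alpha_jD(Q\|P_j)}$. Taking $-\tfrac1n\log$ and stripping the factor $n$ via the additivity $D_{\underline{\alpha}}(P_1^n,\dots,P_{k+1}^n)=nD_{\underline{\alpha}}(P_1,\dots,P_{k+1})$ gives
\[
\Big|\,D_{\underline{\alpha}}(P_1,\dots,P_{k+1})-\min_{Q\in\ms{P}^n(\m{X})}\sum_{j=1}^{k+1}\alpha_jD(Q\|P_j)\,\Big|\le\frac{|\m{X}|\log(n+1)}{n},
\]
and letting $n\to\infty$ finishes the proof, since $\bigcup_n\ms{P}^n(\m{X})$ is dense in $\ms{P}(\m{X})$ and $Q\mapsto\sum_j\alpha_jD(Q\|P_j)$ is continuous on the compact set $\ms{P}\big(\bigcap_{j:\alpha_j>0}S(P_j)\big)$ and identically $+\infty$ on its complement.

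I do not expect a genuine obstacle here; the only point needing care — and it is routine, dispatched exactly as around equation~(\ref{eq:J}) — is the bookkeeping of infinite values: when $\bigcap_{j:\alpha_j>0}S(P_j)=\emptyset$ the left side is $-\log 0=+\infty$ while the minimum on the right is also $+\infty$, so both sides agree trivially. A cleaner alternative, which I would in fact prefer, bypasses types altogether: with $w(x)\dfn\prod_jP_j(x)^{\alpha_j}$ and $c\dfn\sum_xw(x)$, the weighted arithmetic--geometric mean inequality gives $0\le c\le 1$, and when $c>0$ the distribution $w/c$ is well defined and the one-line identity $\sum_j\alpha_jD(Q\|P_j)=D(Q\,\|\,w/c)-\log c$ immediately exhibits $Q=w/c$ as the minimizer with minimal value $-\log c=D_{\underline{\alpha}}(P_1,\dots,P_{k+1})$; the case $c=0$ again makes both sides $+\infty$. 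This mirrors Property~\ref{subsec:D}.\ref{prop:opt_D}.
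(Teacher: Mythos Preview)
Your first approach---additivity plus the method-of-types sandwich from the proof of Theorem~\ref{thrm}---is exactly what the paper has in mind: it gives no explicit proof of the corollary, only the preceding remark that additivity of $D_{\underline{\alpha}}$ is easily verified and ``leads to'' the result, pointing back to the $J_{\alpha,\beta}$ computation. Your observation that $\sum_j\alpha_j=1$ makes the entropy terms cancel is the one new wrinkle, and you handle the support/infinity bookkeeping correctly.

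Your second, direct route via the identity $\sum_j\alpha_jD(Q\|P_j)=D(Q\,\|\,w/c)-\log c$ is also correct and is genuinely more economical: it bypasses types, limits, and density arguments entirely, and simultaneously identifies the minimizer. The paper does not take this route for the corollary, though it is the same verification-by-substitution idea used for Property~\ref{subsec:D}.\ref{prop:opt_D}. The trade-off is that the types argument is what the paper's narrative is built around (and it generalizes uniformly to arbitrary exponents $\alpha,\beta$ not summing to one, where an entropy term survives), whereas your direct identity exploits the special structure of a probability vector $\underline{\alpha}$ and would need the extra $H(Q)$ term reinstated otherwise.
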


\begin{theorem}
For any $0<\alpha\leq (1+\lambda)^{-1}$,
\begin{equation*}
E_{\scriptscriptstyle MD}^*  \geq  (1+\lambda)^{-1}\alpha^{-1}D_{\left(\alpha,\lambda\alpha\right)}(\mb{P_1},\mb{P_2},\mb{Q}) \dfn E_{\scriptscriptstyle MD}^*(\alpha)
\end{equation*}
Furthermore, $E_{\scriptscriptstyle MD}^*(\alpha)$ is monotonically non-increasing in $\alpha$, and if $E_{\scriptscriptstyle MD}^*<\infty$ then
\begin{equation*}
E_{\scriptscriptstyle MD}^*  = \lim_{\alpha\rightarrow 0^+}E_{\scriptscriptstyle MD}^*(\alpha)
\end{equation*}
\end{theorem}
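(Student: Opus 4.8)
The plan is to rewrite $E_{\scriptscriptstyle MD}^*(\alpha)$ through the closed form supplied by the Corollary. For $0<\alpha\leq(1+\lambda)^{-1}$ the triple $(\alpha,\lambda\alpha,1-(1+\lambda)\alpha)$ is a probability vector, so applying the Corollary to $D_{(\alpha,\lambda\alpha)}(P_1,P_2,Q)$ (with the third distribution being the noise $Q$ and the minimizing variable denoted $R$), dividing through by $\alpha$, and then taking the infimum over the families, I get --- writing $t\dfn\alpha^{-1}-(1+\lambda)\geq 0$ ---
\begin{align*}
&\alpha^{-1}D_{(\alpha,\lambda\alpha)}(\mb{P_1},\mb{P_2},\mb{Q})\\
&\quad= \inf\,\big\{D(R\|P_1)+\lambda D(R\|P_2)+t\,D(R\|Q)\big\},
\end{align*}
the infimum being over all $P_1\in\mb{P_1}$, $P_2\in\mb{P_2}$, $Q\in\mb{Q}$ and $R\in\ms{P}(\m{X})$. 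Note that $t$ runs over $[0,\infty)$ as $\alpha$ decreases over $(0,(1+\lambda)^{-1}]$, and that $E_{\scriptscriptstyle MD}^*(\alpha)$ is $(1+\lambda)^{-1}$ times this infimum.

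To obtain $E_{\scriptscriptstyle MD}^*\geq E_{\scriptscriptstyle MD}^*(\alpha)$ I would simply restrict $R=Q$ in the infimum, which annihilates the last term and yields $\alpha^{-1}D_{(\alpha,\lambda\alpha)}(\mb{P_1},\mb{P_2},\mb{Q})\leq\inf_{P_1,P_2,Q}\big\{D(Q\|P_1)+\lambda D(Q\|P_2)\big\}$; the right-hand side equals $(1+\lambda)E_{\scriptscriptstyle MD}^*$ by the formula for $E_{\scriptscriptstyle MD}^*$ established in the proof of Theorem~\ref{thrm2} (equation~(\ref{eq:E_md2}), after unfolding $D(Q\|\mb{P_i})$), so dividing by $1+\lambda$ finishes this part. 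Monotonicity is equally direct: for each fixed $(R,P_1,P_2,Q)$ the bracketed expression is non-decreasing in $t$ since $D(R\|Q)\geq 0$, an infimum of non-decreasing functions is non-decreasing, and $t=\alpha^{-1}-(1+\lambda)$ is strictly decreasing in $\alpha$; hence $E_{\scriptscriptstyle MD}^*(\alpha)$ is non-increasing in $\alpha$.

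For the limit, suppose $E_{\scriptscriptstyle MD}^*<\infty$. By the monotonicity just shown together with $E_{\scriptscriptstyle MD}^*(\alpha)\leq E_{\scriptscriptstyle MD}^*$, the limit $L\dfn\lim_{\alpha\to0^+}E_{\scriptscriptstyle MD}^*(\alpha)=\sup_{\alpha}E_{\scriptscriptstyle MD}^*(\alpha)$ exists with $L\leq E_{\scriptscriptstyle MD}^*$, and I only need to rule out $L<E_{\scriptscriptstyle MD}^*$. Assuming the contrary, for each $n$ pick $\alpha_n$ with $t_n\dfn\alpha_n^{-1}-(1+\lambda)\geq n$; since $E_{\scriptscriptstyle MD}^*(\alpha_n)\leq L$, there are $P_{1,n}\in\mb{P_1}$, $P_{2,n}\in\mb{P_2}$, $Q_n\in\mb{Q}$, $R_n\in\ms{P}(\m{X})$ with
\begin{equation*}
D(R_n\|P_{1,n})+\lambda D(R_n\|P_{2,n})+t_n\,D(R_n\|Q_n)\leq(1+\lambda)L+\tfrac1n.
\end{equation*}
Nonnegativity of the three summands forces $D(R_n\|Q_n)\to 0$ and $\limsup_n\big(D(R_n\|P_{1,n})+\lambda D(R_n\|P_{2,n})\big)\leq(1+\lambda)L$. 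Passing to a subsequence along which $R_n,Q_n,P_{1,n},P_{2,n}$ converge (using compactness of $\ms{P}(\m{X})$) and invoking joint lower semicontinuity of the Kullback--Leibler divergence, the limits satisfy $D(R^*\|Q^*)=0$, i.e.\ $R^*=Q^*$, and $D(Q^*\|P_1^*)+\lambda D(Q^*\|P_2^*)\leq(1+\lambda)L<(1+\lambda)E_{\scriptscriptstyle MD}^*$; this contradicts $(1+\lambda)E_{\scriptscriptstyle MD}^*=\inf_{P_1,P_2,Q}\big\{D(Q\|P_1)+\lambda D(Q\|P_2)\big\}$, so $L=E_{\scriptscriptstyle MD}^*$.

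I expect the main obstacle to be hidden in this last step: the subsequential limits $Q^*,P_1^*,P_2^*$ a priori lie only in the closures of $\mb{Q},\mb{P_1},\mb{P_2}$, so the argument is rigorous as stated only if these families are taken to be closed (the natural setting here, and the one met in all the examples), or after first checking that replacing each family by its closure does not change the infimum defining $E_{\scriptscriptstyle MD}^*$. In essence, the content of the limiting claim is the interchange of the $\alpha\to0^+$ limit with the infimum over the source and noise families, and the machinery above --- monotone convergence in $t$, compactness of $\ms{P}(\m{X})$, lower semicontinuity of the divergence --- is exactly what is meant to make that interchange valid.
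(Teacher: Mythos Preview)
Your proposal is correct and follows essentially the same route as the paper. Both start from the variational form of $D_{(\alpha,\lambda\alpha)}$ supplied by the Corollary, obtain the inequality by restricting the auxiliary distribution to lie in $\mb{Q}$ (you do this as $R=Q$; the paper does the equivalent step in the reverse direction, relaxing the infimum in (\ref{eq:E_md2}) from $Q\in\mb{Q}$ to $Q'\in\ms{P}(\m{X})$ and adding the nonnegative penalty $(\alpha^{-1}-(1+\lambda))D(Q'\|\mb{Q})$), and read off monotonicity from the fact that the coefficient $t=\alpha^{-1}-(1+\lambda)$ multiplies a nonnegative term.

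For the limit, the paper only gives a one-line sketch (``similar to Property~\ref{subsec:H}.\ref{prop:H_1}''): since $E_{\scriptscriptstyle MD}^*<\infty$ bounds the whole expression, the blow-up of $t$ forces $D(Q'_\alpha\|\mb{Q})\to 0$ for the optimizer, after which the remaining two divergence terms recover the expression (\ref{eq:E_md2}). Your subsequence/compactness/lower-semicontinuity argument is a more explicit way to carry out exactly this step, and you are right to flag the closure issue: the paper's sketch implicitly uses it too (once $D(Q'_\alpha\|\mb{Q})\to 0$, any limit point of $Q'_\alpha$ lies only in $\overline{\mb{Q}}$, and similarly the near-optimal $P_{i}$'s need only converge into $\overline{\mb{P_i}}$). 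So your caveat is a genuine refinement rather than a divergence from the paper's approach.
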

\begin{proof}
\begin{align*}
\begin{multlined}
E_{\scriptscriptstyle MD}^* = (1+\lambda)^{-1}\inf_{Q\in\mb{Q}}\left(D(Q\|\mb{P_1}) + \lambda D(Q\|\mb{P_2})\right)
\\
\shoveleft[\widthof{$E_{\scriptscriptstyle MD}^*$}]{\geq (1+\lambda)^{-1}\min_{Q'\in\ms{P}(\m{X})}\left(D(Q'\|\mb{P_1}) + \lambda D(Q'\|\mb{P_2})\right.}
\\
\shoveright{\left.+ \left(\alpha^{-1}-(\lambda+1)\right)D(Q'\|\mb{Q})\right)}
\\
\shoveleft[\widthof{$E_{\scriptscriptstyle MD}^*$}]{ = (1+\lambda)^{-1}\alpha^{-1}D_{\left(\alpha,\lambda\alpha\right)}(\mb{P_1},\mb{P_2},\mb{Q})\,.}\\\vspace{-10pt}
\end{multlined}
\end{align*}
Monotonicity is clear from the second line above. Tightness in the limit is proved in a similar way to Property \ref{subsec:H}.\ref{prop:H_1}, by noting that $E_{\scriptscriptstyle MD}^*<\infty$ implies $D(Q'\|\mb{Q})\rightarrow 0$ as $\alpha\rightarrow 0$ for the optimizing $Q'$.
\end{proof}


\begin{thebibliography}{10}

\bibitem{Csiszar2008}
I.~Csisz\'ar,
\newblock ``Axiomatic characterizations of information measures,''
\newblock {\em Entropy}, vol. 10, no. 3, pp. 261--273, 2008.

\bibitem{Renyi1960}
A.~{R}\'enyi,
\newblock ``On measures of entropy and information,''
\newblock in {\em Proc. 4th Berkeley Sympos. Math. Stat. and Prob.}, 1960,
  vol.~1, pp. 547--561.

\bibitem{Campbell65}
L.L. Campbell,
\newblock ``A coding theorem and {R}\'enyi's entropy,''
\newblock {\em Information and Control}, vol. 8, no. 4, pp. 423 -- 429, 1965.

\bibitem{Renyi65}
A.~R\'enyi,
\newblock ``On the foundations of information theory,''
\newblock {\em Review of the International Statistical Institute}, vol. 33, no.
  1, pp. 1--14, 1965.

\bibitem{jelinek68}
F.~Jelinek,
\newblock ``Buffer overflow in variable length coding of fixed rate sources,''
\newblock {\em IEEE Trans. Info. Theory}, vol. IT-14, pp. 490 -- 501, May 1968.

\bibitem{Csiszar95}
I.~Csisz\'ar,
\newblock ``Generalized cutoff rates and {R}\'enyi's information measures,''
\newblock {\em IEEE Trans. Inform. Theory}, vol. 41, no. 1, pp. 26 --34, Jan.
  1995.

\bibitem{arikan96}
E.~Arikan,
\newblock ``An inequality on guessing and its application to sequential
  decoding,''
\newblock {\em IEEE Trans. on Info. Theory}, vol. 42, no. 1, pp. 99 --105, Jan.
  1996.

\bibitem{Bennet_etal1995}
C.H. Bennett, G.~Brassard, C.~Crepeau, and U.M. Maurer,
\newblock ``Generalized privacy amplification,''
\newblock {\em IEEE Trans. on Info. Theory}, vol. 41, no. 6, pp. 1915 --1923,
  Nov. 1995.

\bibitem{ErezZamir2001}
U.~Erez and R.~Zamir,
\newblock ``Error exponents of modulo-additive noise channels with side
  information at the transmitter,''
\newblock {\em IEEE Trans. on Info. Theory}, vol. 47, no. 1, pp. 210 --218,
  Jan. 2001.

\bibitem{Delay_constrained_coding_IT}
O.~Shayevitz, E.~Meron, M.~Feder, and R.~Zamir,
\newblock ``Delay and redundancy in lossless source coding,''
\newblock {\em IEEE Trans. on Info. Theory, submitted}.

\bibitem{Gallager65}
R.~Gallager,
\newblock ``A simple derivation of the coding theorem and some applications,''
\newblock {\em IEEE Trans. on Info. Theory}, vol. 11, no. 1, pp. 3 -- 18, Jan.
  1965.

\bibitem{PolyanskiyVerdu2010}
Y.~Polyanskiy and S.~Verd\'u,
\newblock ``Arimoto channel coding converse and {R}\'enyi divergence,''
\newblock in {\em Proc. of the 48th Allerton Conference on Communication,
  Control, and Computing}, 2010.

\bibitem{Mansour09}
Y.~Mansour, M.~Mohri, and A.~Rostamizadeh,
\newblock ``Multiple source adaptation and the {R}\'enyi divergence,''
\newblock in {\em Proc. of the 25th Conference on Uncertainty in Artificial
  Intelligence}, 2009.

\bibitem{Erven10}
T.~van Erven and P.~Harremoes,
\newblock ``{R}\'enyi divergence and majorization,''
\newblock in {\em Proc. of the International Symposium on Information Theory},
  2010, pp. 1335 --1339.

\bibitem{Arimoto77}
S.~Arimoto,
\newblock ``Information measures and capacity of order {$\alpha $} for discrete
  memoryless channels,''
\newblock in {\em Topics in information theory ({S}econd {C}olloq.,
  {K}eszthely, 1975)}, pp. 41--52. Colloq. Math. Soc. J\'anos Bolyai, Vol. 16.
  North-Holland, Amsterdam, 1977.

\bibitem{Ingber_etal2008}
A.~Ingber, I.~Leibowitz, R.~Zamir, and M.~Feder,
\newblock ``Distortion lower bounds for finite dimensional joint source-channel
  coding,''
\newblock in {\em Proc. of the International Symposium on Information Theory},
  July 2008, pp. 1183--1187.

\bibitem{MerhavArikan99}
N.~Merhav and E.~Arikan,
\newblock ``The {S}hannon cipher system with a guessing wiretapper,''
\newblock {\em IEEE Trans. on Info. Theory}, vol. 45, no. 6, pp. 1860 --1866,
  Sept. 1999.

\bibitem{csiszar_korner}
I.~Csisz\'ar and J.~K\"{o}rner,
\newblock {\em Information theory : Coding theorems for discrete memoryless
  systems},
\newblock 1986.

\bibitem{Sion1958}
M.~Sion,
\newblock ``On general minimax theorems,''
\newblock {\em Pac. J. Math.}, vol. 8, pp. 171--176, 1958.

\bibitem{CsiszarShields}
I.~Csisz\'ar and P.~C. Shields,
\newblock ``Information theory and statistics: A tutorial,''
\newblock {\em Foundations and Trends in Communications and Information
  Theory}, vol. 1, no. 4, 2004.

\end{thebibliography}
\end{document}